\pdfoutput=1

\documentclass[sigconf,usenames,dvipsnames]{acmart}

\usepackage{cleveref}
\usepackage{tabularx}
\usepackage{mathtools}
\usepackage{subcaption}
\usepackage{enumitem}
\usepackage{multirow} 
\usepackage{svg}

\DeclareMathOperator*{\argmin}{argmin}

\newcommand{\problem}{\textsc{Best Approximation Refinement}} %

\newcommand{\running}{\textit{scholarship query}}
\newcommand{\candidates}{\texttt{Students}}
\newcommand{\internships}{\texttt{Activities}}

\newcommand{\lb}[2]{\ell_{#1, #2}}
\newcommand{\ub}[2]{\mathscr{u}_{#1, #2}}

\newcommand{\constraints}{\mathcal{C}}

\newcommand{\conds}{\textsf{Preds}}
\newcommand{\cat}{\textsf{Cat}}
\newcommand{\num}{\textsf{Num}}
\newcommand{\attr}{\textsf{Attr}}
\newcommand{\sign}{\textsf{Sign}}

\newcommand{\prov}{\textsf{Lineage}}

\newcommand{\numbercircled}[1]{\textcircled{\small #1}}

\usepackage{subfiles}

\begin{document}

\title{Query Refinement for Diverse Top-$k$ Selection (Technical Report)}

\author{Felix S. Campbell}
\orcid{0000-0003-3888-1491}
\affiliation{%
  \institution{Ben-Gurion University of the Negev}
  \city{}
  \state{}
  \country{}
}
\email{felixsal@post.bgu.ac.il}

\author{Alon Silberstein}
\orcid{0009-0009-7591-1267}
\affiliation{%
  \institution{Ben-Gurion University of the Negev}
  \city{}
  \state{}
  \country{}
}
\email{alonzilb@post.bgu.ac.il}

\author{Julia Stoyanovich}
\orcid{0000-0002-1587-0450}
\affiliation{%
  \institution{New York University}
  \city{}
  \state{}
  \country{}
}
\email{stoyanovich@nyu.edu}

\author{Yuval Moskovitch}
\orcid{0000-0001-5109-3700}
\affiliation{%
  \institution{Ben-Gurion University of the Negev}
  \city{}
  \state{}
  \country{}
}
\email{yuvalmos@bgu.ac.il}

\begin{abstract}
Database queries are often used to select and rank items as decision support for many applications. As automated decision-making tools become more prevalent, there is a growing recognition of the need to diversify their outcomes. In this paper, we define and study the problem of modifying the selection conditions of an {\tt ORDER BY} query so that the result of the modified query 
closely fits some user-defined notion of diversity while simultaneously maintaining the intent of the original query. We show the hardness of this problem and propose a mixed-integer linear programming (MILP) based solution. We further present optimizations designed to enhance the scalability and applicability of the solution in real-life scenarios. 
We investigate the performance characteristics of our algorithm and show its efficiency and the usefulness of our optimizations.

\end{abstract}

\keywords{Query refinement, diversity, provenance, ranking, top-k}

\maketitle

\section{Introduction}
\label{sec:introduction}

Ranking-based decision making is prevalent in various application domains, including hiring~\cite{GeyikAK19} and school admission~\cite{peskun2007effectiveness}. Typically, this process involves selecting qualifying candidates based on specific criteria (e.g., for a job position) and ranking them using a quantitative measure to identify the top candidates among those who qualify (e.g., for a job interview or offer). This process may be automated and expressed using %
SQL queries, with the {\tt WHERE} clause used to select candidates who meet certain requirements, and the {\tt ORDER BY} clause used to rank them. We next illustrate this idea using a simple example in the context of awarding scholarships.

\begin{example}
	\label{ex:running}
	Consider a foundation that wishes to grant six high-performing students scholarships to universities in order to encourage participation in STEM programs. 
 The foundation utilizes a database of all students seeking scholarships provided by their schools, which may be filtered according to the requirements of the foundation. 
 \Cref{tab:candidates} shows the students dataset, consisting of five attributes: a unique ID, gender, family's income level, grade point average (GPA), and SAT score. The schools also provide information on the student's 
 involvements with extracurricular activities, which are shown in \Cref{tab:internships} as a dataset with two attributes: the student's ID and an abbreviation representing the activity in which they participated. The set of activities in \Cref{tab:internships} consists of robotics ($RB$), Science Olympiad ($SO$), Math Olympiad ($MO$), game development ($GD$), and a STEM tutoring organization ($TU$).
 
    The foundation would like to award these scholarships to students who have displayed interest in STEM fields through their involvement in extracurricular activities and have maintained a minimum GPA. The selected students are ranked by their SAT exam scores, and the foundation grants funding to the best six students and additional funding to the top three students.
    These requirements can be expressed using the following query, which selected students who have participated in an extracurricular robotics club with a minimum GPA of $3.7$:
    
    \begin{center}
    \footnotesize
    \begin{tabular}{l}
        \verb"SELECT DISTINCT ID, Gender, Income "\\
        \verb"FROM Students NATURAL JOIN Activities"\\
        \verb"WHERE GPA >= 3.7 AND Activity = 'RB'"\\
        \verb"ORDER BY SAT DESC"\\
    \end{tabular}
    \end{center}
    We refer to this query throughout as the \running{}. Evaluating this query over the datasets in \Cref{tab:candidates,tab:internships} produces the ranking $[t_4, t_7, t_8, t_{10}, t_{11}, t_{12}]$ %
    Therefore, the foundation awards students $t_4$, $t_7$, $t_8$ with an extra scholarship and students $t_{10}$, $t_{11}$, and $t_{12}$ with the regular amount.
\end{example}

\begin{table}[ht]
\begin{minipage}{0.6\linewidth}
        \centering
		\caption{\candidates{}}
		\label{tab:candidates}
        \footnotesize
		\begin{tabular}{lccrr}
		\hline
		\textbf{ID} & \textbf{Gender} & \textbf{Income} & \textbf{GPA} & \textbf{SAT $\downarrow$} \\ \hline
		$t_1$           & M               & Medium         & 3.7      & 1590 \\    
		$t_2$           & F               & Low         & 3.8      & 1580 \\    
		$t_3$           & F               & Low         & 3.6      & 1570 \\    
		$t_4$           & M               & High         & 3.8      & 1560 \\    
		$t_5$           & F               & Medium         & 3.6      & 1550 \\    
		$t_6$           & F               & Low         & 3.7      & 1550 \\    
		$t_7$           & M               & Low         & 3.7      & 1540 \\    
		$t_8$           & F               & High         & 3.9      & 1530 \\    
		$t_9$           & F               & Medium         & 3.8      & 1530 \\    
		$t_{10}$           & M               & High         & 3.7      & 1520 \\    
		$t_{11}$           & F               & Low         & 3.8      & 1490 \\    
		$t_{12}$           & M               & Medium         & 4.0      & 1480 \\    
		$t_{13}$           & M               & High         & 3.5      & 1430 \\  
  	$t_{14}$           & F               & Low         & 3.7      & 1410 \\  
		\end{tabular}
\end{minipage}
\hfill
\begin{minipage}{.35\linewidth}
    \centering
    \caption{\internships{}}
    \label{tab:internships}
    \footnotesize
    \begin{tabular}{lc}
    \hline
    \textbf{ID} & \textbf{Activity} \\ \hline
        $t_{1}$            & SO        \\  
        $t_{2}$            & SO        \\  
        $t_{3}$            & GD        \\  
        $t_{4}$            & RB        \\
        $t_{4}$            & TU        \\
        $t_{5}$            & MO        \\  
        $t_{6}$            & SO        \\  
        $t_{7}$            & RB        \\  
        $t_{8}$          & RB        \\
        $t_{8}$           & TU        \\
        $t_{10}$           & RB        \\  
        $t_{11}$           & RB        \\  
        $t_{12}$           & RB        \\
        $t_{14}$          & RB        \\
    \end{tabular}
\end{minipage}
\end{table}

If the query is part of some high-stake decision-making process, stating diversity requirements as cardinality constraints over the presence of some demographic groups in the top-$k$ result is natural. For instance, in the above example, the foundation may wish to promote female students in STEM by awarding a proportional number of scholarships to male and female applicants, i.e.,  top-$6$ tuples in the output should include at least three females. Moreover, to expand access to STEM education, the foundation may also wish to limit the extended scholarships granted to students from high-income families. Namely, the top-$3$ results should include at most one student with a high income. The \running{} does not satisfy these constraints since the top-$6$ tuples are $t_4, t_7, t_8, t_{10}, t_{11}$ and $t_{12}$ include only two females ($t_8$ and $t_{11}$), and the top-$3$ includes two students from high-income families ($t_4$ and $t_{8}$).

 In this paper, we propose a novel \emph{in-processing} method to improve the diversity of a ranking by refining the query that produces it.%
 
 \begin{example}
    \label{ex:first_refinement}
    The \running{} may be refined by adjusting the condition on \verb"Activity" to include students involved in Science Olympiad ($SO$), resulting in the following query:
    \begin{center}
    \footnotesize
    \begin{tabular}{l}
        \verb"SELECT DISTINCT ID, Gender, Income "\\
        \verb"FROM Students NATURAL JOIN Activities"\\
        \verb"WHERE GPA >= 3.7 AND (Activity = 'RB' OR Activity = 'SO')"\\
        \verb"ORDER BY SAT DESC"\\
    \end{tabular}
    \end{center}
    Note that the essence of the query (selecting students who have displayed interest in STEM) is maintained by the refined query, while the constraints are satisfied as the top-$6$ tuples ($t_1, t_2, t_4, t_6, t_7,$ and $t_8$) consist of three women ($t_2, t_6$ and $t_8$) where the top-$3$ includes only a single student ($t_4$) with high-income.

\end{example}

The notion of refining queries to satisfy a set of diversity constraints was recently presented in~\cite{ERICA, ERICAfull}, however, this work focuses on cardinality constraints over the {\it entire} output and does not consider the order of tuples.
The problem of ensuring diverse outputs in ranking queries has received much recent attention from the research community~\cite{YS17, CSV18, YGS19, AJS19, KR18, CMV20}. For instance, in \cite{YS17, CSV18, YGS19}, output rankings are modified directly in a \emph{post-processing step}, in order to satisfy a given set of constraints over the cardinality of protected groups in the ranking. E.g., to fulfill the desired constraints in the above example, the foundation may manipulate the output, awarding $t_4$, $t_5$, $t_6$, $t_7$, $t_8$ and $t_{10}$ with a scholarship, where $t_4$, $t_5$ and $t_6$ will get the extended grant. However, this leaves open the question of \emph{how one may obtain such a ranking in the first place}.

Further, post-processing may be problematic to use to improve diversity, for two reasons: (1) by definition, it modifies the results after they were computed, raising a procedural fairness concern, and (2) it may explicitly use information about demographic or otherwise protected group membership, raising a disparate treatment concern. In contrast, in-processing %
is usually legally permissible, essentially because it applies the same evaluation process to all individuals. That is, by modifying the query we produce a new set of requirements, and test all individuals against these same requirements.  In contrast, post-processing methods may decide to include or exclude individuals based on which groups they belong to, therefore treading individuals differently depending on group membership.
Alternative in-processing solutions involve adjustments to the ranking algorithm~\cite{AJS19} or modifying items to produce a different score~\cite{KR18, CMV20}. Our approach, conversely, assumes the ranking algorithms and scores of different items are well-designed and fixed, and we aim to modify the set of tuples to be ranked.

 Our goal is to find minimal refinements to the original query that fulfill a specified set of constraints, however, we note that the notion of minimality may be defined in different ways, depending, for example, on the legal requirements or on the user's preferences.

 \begin{example}
    \label{ex:refine-and-distance}
    We may refine the \running{} by relaxing the GPA requirement to $3.6$ and including students who participated in a game development activity $(GD)$,  obtaining the following query: 
    \begin{center}
    \footnotesize
    \begin{tabular}{l}
        \verb"SELECT DISTINCT ID, Gender, Income "\\
        \verb"FROM Students NATURAL JOIN Activities"\\
        \verb"WHERE GPA >= 3.6 AND (Activity = 'RB' OR Activity = 'GD')"\\
        \verb"ORDER BY SAT DESC"\\
    \end{tabular}
    \end{center}
    Similarly to the refined query from Example~\ref{ex:first_refinement}, the top-$6$ students %
    ($t_3$, $t_4$, $t_7$, $t_8$, $t_{10}$, $t_{11}$, and $t_{12}$) include three women ($t_3$, $t_8$, and $t_{11}$), and there is only a single high-income student ($t_3$) among the top-$3$. While the predicates of this refined query are intuitively more distant from the original query than our prior refinement in Example~\ref{ex:first_refinement} (two modifications compared to a single one), its output is more similar to the output of the original query (the top-$3$ sets differ by one tuple).
    
\end{example}

To accommodate different alternative query relaxation objectives, as illustrated above, we propose a framework that allows the user to specify their preferred notion of minimality. 

To the best of our knowledge, our work is the first to intervene on the ranking process by modifying {\it which} items are being considered by the ranking algorithm. This admits a large class of ranking algorithms while keeping the relative order of tuples consistent. This---of course---does not come for free; the coarseness of refinements means that there may be no refinement that produces a satisfactorily diverse ranking. Therefore, we study the problem of finding a refined query %
that is within a specified maximum distance from satisfying all of the constraints, if one exists. %

\paragraph*{Contributions \& roadmap}
We begin by formalizing the \problem{} problem of obtaining a refined query that is closest, according to a given distance measure, to the original query, while still adhering to a set of cardinality constraints within a maximum distance, and show that this problem is {\sf NP-hard} (\Cref{sec:background}).
We thus propose modeling the problem as a Mixed Integer Linear Program (MILP) and utilizing it to derive an approximate solution (\Cref{sec:search}). Inspired by the use of data annotations (provenance) to perform what-if analysis, i.e., to efficiently reevaluate queries using algebraic expression without constantly accessing a DBMS (see, e.g., ~\cite{DIMT13, BourhisDM16, DeutchMT14, MoskovitchLJ22}), we construct the MILP using data annotation variables. 
This formulation offers two advantages: it allows us to leverage the effectiveness of existing MILP solvers while avoiding the costly reevaluation of queries on the DBMS.

Existing MILP solvers may solve the problem efficiently however their performance is sensitive to the size of the program. The program generated in Section~\ref{sec:search} is linear in the data size, which can be challenging in real-life scenarios as we demonstrate in our experimental evaluation. We, therefore, propose optimizations to make our approach more scalable, using the relevancy of the data and the structure of the set of cardinality constraints to prune and relax our problem (\Cref{sec:optimizations}). In \Cref{sec:experiments}, we present an extensive experimental evaluation. We developed a dedicated benchmark consisting of real-life datasets and considering realistic scenarios. Our results show the efficiency and scalability of our approach with respect to different parameters of the problem.

\section{Problem Overview}
\label{sec:background}

In this paper, we consider the class of conjunctive Select\footnote{{\tt DISTINCT} is supported, and is used to select individuals uniquely if they should not appear more than once in the output.}-Project-Join (SPJ)\footnote{We note that the system may be extended easily to handle unions, but we omit its description here due to space constraints.} queries with an \texttt{ORDER BY}~$s$ clause, generating a ranked list of tuples, where $s$ is a \emph{score function} of a single tuple $t$.  
A query $Q$ may have numerical and categorical selection predicates, denoted $\num(Q)$ and $\cat(Q)$, respectively. Numerical predicates are of the form $A \diamond C$, where $A$ is a numerical attribute $C \in \mathbb{R}$, and $\diamond \in \{<, \leq,=, >, \geq \}$. %
Categorical predicates are of the form $\bigvee_{c \in C} A = c$, where $A$ is a categorical attribute and $C$ is a set of constants from the domain of $A$. Selection operators combine predicates by taking their conjunction. We use $\conds(Q)$ to denote the set of attributes appearing in the selection predicates of $Q.$
In the rest of the paper, we simply use query to refer to such queries.

\subsection{Preliminaries}
\label{sec:cardinality-constraints}
\paragraph*{Cardinality constraints} Imposing constraints on the cardinality of tuples belonging to a certain group in a query result to mitigate bias and improve diversity was studied in~\cite{MLJ22, ERICA}. 
In the context of ranking, cardinality constraints are used over the top-$k$ of the ranking for various values of $k$ (see, e.g., \cite{CSV18,YGS19, ICDE23}). Following this vein of research, we allow users to define constraints on the cardinality of groups (i.e., data subgroups) for this setting. 

A {\it group} is a collection of tuples that share the same value(s) for one or more (categorical) attributes and is defined by a conjunction of conditions over values of the attributes. For instance, in \Cref{ex:running}, the group including women students 
is defined by the condition ${\tt Gender} = F$ and consists of students $t_2$, $t_3$, $t_5$, $t_6$, $t_8$,  $t_9$,  $t_{11}$, and $t_{14}$. The group of low-income women candidates is then defined by the condition ${\tt Gender} = F \wedge {\tt Income} = Low$ and consists of students $t_2$, $t_3$, $t_6$, $t_{11}$, and $t_{14}$. A cardinality constraint $\lb{G}{k} = n$ (or $\ub{G}{k} = n$) specifies a lower (or an upper) bound of $n$ tuples belonging to a group $G$ appearing within the top-$k$ tuples of the result. For instance, in our running example, the constraint ``at least $3$ of the top-$6$ candidates are women'', can be expressed as  $\lb{{\tt Gender} = F}{k=6} = 3$. Multiple cardinality constraints may be composed together, forming a constraint set that we denote by~$\constraints{}$.

\paragraph*{Refinements} We use the notion of query refinement defined in \cite{MK09}. Given a query $Q$, a refinement of $Q$ modifies its selection predicates. A numerical predicate $A \diamond C~\in \num(Q)$ is a modification to the value of $C$. For categorical predicates $\bigvee_{c \in C} A = c~ \in \cat(Q)$, a refinement is done by adding and/or removing predicates from the set of values $C$.  We say that a query $Q'$ is a \emph{refinement} of query $Q$ if $Q'$ is obtained from $Q$ by refining some predicates of $Q$.

\begin{example}
    The \running{} has two predicates: a numerical predicate {\tt GPA $\geq$ 3.7} and a categorical predicate {\tt Activity = `RB'}. A possible refinement of the numerical predicate may be {\tt GPA $\geq$ 3.6}. The categorical predicate may be refined by adding {\tt `GD'} to $C$. The refined query resulting from $Q$ by applying these refinements is the query depicted in Example~\ref{ex:refine-and-distance}. 
\end{example}

\subsection{Refinement Distance}
\label{sec:distance}

Our objective is to find a refinement $Q'$ that fulfills a specified set of constraints \emph{and} preserves the essence of the intent of query $Q$, i.e., is in some sense close to $Q$. A key question is how to measure the distance between a query $Q$ and a refinement $Q'$. %

Recall from \Cref{ex:refine-and-distance} that there may be multiple ways to define such distance. %
In this paper, we support distance functions of two kinds --- those that compare the predicates of $Q$ and $Q'$ (\emph{predicate-based}) and those that compare the top-$k$ results of $Q$ and $Q'$, either as sets or in ranked order (\emph{outcome-based}).  In both cases, a distance function returns a real number, with a smaller value indicating closer proximity between $Q$ and $Q'$.
As we will discuss later, we use mixed-integer linear programming to find query refinements. Hence, the distance function must be linear (or able to be linearized) in the variables of its input. However, this limitation still permits a diverse set of valuable distance measures, as we demonstrate next.

\paragraph*{Predicate-based distance}\sloppy
    Given a query $Q$ and a refinement $Q'$, a natural distance measure with respect to a numerical predicate $n_Q = A \diamond C~\in \num(Q)$ is $\vert n_Q.C - n_{Q'}.C \vert$, where $n_Q.C$ is the value of $C$ in $n_Q$ and $n_{Q'}.C$ is the value of $C$ in $Q'$. The distance between all numerical predicates may be (normalized and) aggregated as $\sum_{n_Q\in \num(Q)} \frac{\vert n_Q.C - n_{Q'}.C \vert}{n_Q.C}$.
     The distance between categorical attributes may be measured using the Jaccard distance, defined for a pair of sets $R$ and $S$ as $J(R, S) = 1 - \frac{|R \cap S|}{|R \cup S|}$. We may aggregate the distance across categorical predicates as $\sum_{c_Q \in \cat(Q)} J(c_Q.C, c_{Q'}.C)$, where $c_{Q'}\in \cat(Q')$ is the corresponding categorical attribute of $c_Q\in \cat(Q)$ ($c_Q$ and $c_{Q'}$ are of the form $\bigvee_{c \in C} A = c$). 
     
     Combining numerical and categorical components, we formulate the predicate-based distance between $Q$ and $Q'$ as: %
    $$DIS_{pred}(Q,Q')=\sum_{n_Q\in \num(Q)} \frac{\vert n_Q.C - n_{Q'}.C \vert}{n_Q.C}+ \sum_{c_Q \in \cat(Q)} J(c_Q.C, c_{Q'}.C)$$

\begin{example}
    Let $Q'$ and $Q''$ be the refinements of the \running{} $Q$ shown in Examples~\ref{ex:first_refinement} and ~\ref{ex:refine-and-distance}, respectively. We compute $DIS_{pred}(Q,Q') = \frac{3.7 - 3.7}{3.7} + (1 - \frac{|\{RB\}|}{|\{RB, SO\}|}) = 0.5$, which is smaller than $DIS_{pred}(Q,Q'') = \frac{3.7 - 3.6}{3.7} + (1 - \frac{|\{RB\}|}{|\{RB, MO\}|}) \approx 0.53$.
\end{example}    
\paragraph*{Outcome-based distance} 
Distance measures in this family compare the top-$k$ items $Q(D)_k$ and $Q'(D)_k$, for some value of $k$.  %
We consider two types of distance measures: those that look at the top-$k$ as sets, and those that are sensitive to the ranked order among the top-$k$ items.  We give a couple of examples below, noting that many other set-wise and rank-aware distance metrics can be defined.

A natural distance metric computes the Jaccard distance between the \emph{sets} of top-$k$ items of $Q$ and $Q'$: $DIS_{Jaccard}(Q, Q', k) = J(Q(D)_k, Q'(D)_k)$.
\begin{example}
Let $Q'$ and $Q''$, again, be the refinements of the \running{} $Q$ shown in Examples~\ref{ex:first_refinement} and ~\ref{ex:refine-and-distance}, respectively.
Then, $DIS_{Jaccard}(Q, Q',k=3) = 1 - \frac{|\{t_4\}|}{|\{t_1, t_2, t_4, t_7, t_8\}|} = 0.8$, while $DIS_{Jaccard}(Q, Q'', k=3) = 1 - \frac{|\{t_4, t_7\}|}{|\{t_3, t_4, t_7, t_8\}|} = 0.5$. 
\end{example}

Observe that $Q''$ is closer to $Q$ according to $DIS_{Jaccard}$ at top-$3$, while $Q'$ is closer to $Q$ according to $DIS_{pred}$.

Recall that query refinement does not reorder tuples.  That is, tuples that belong to $Q(D)_k \cap Q'(D)_k$, will appear in the same relative order in both top-$k$ lists. As another alternative, a rank-aware measure may, for example, use a variant of Kendall's $\tau$~\cite{10.1093/biomet/30.1-2.81} that was proposed by~\citet{FKS03} to compare the top-$k$ items of $Q$ and $Q'$. In a nutshell, this measure, which we'll denote $DIS_{Kendall}(Q, Q',k)$, considers the new tuples in the top-$k$ (i.e.,  $Q'(D)_k \setminus Q(D)_k$), and computes how much the tuples in the original top-$k$ ($Q(D)_k$) were displaced. (Cases 2 and 3 from~\cite{FKS03} apply in our setting.) %

Intuitively, if $DIS_{Kendall}(Q, Q',k=3) < DIS_{Kendall}(Q, Q'',k=3)$, then 
the tuples $Q''(D)_k \setminus Q(D)_k$
are positioned closer to the top of the list than those in $Q'(D)_k \setminus Q(D)_k$.

 \begin{example}
     To illustrate $DIS_{Kendall}$, we introduce a new refinement $Q'''$, which we define as:
     \begin{center}
     \footnotesize
     \begin{tabular}{l}
        \verb"SELECT DISTINCT ID, Gender, Income "\\
        \verb"FROM Students NATURAL JOIN Activities"\\
        \verb"WHERE GPA >= 3.6 AND (Activity = 'CS' OR Activity = 'MO')"\\
        \verb"ORDER BY SAT DESC "\\
     \end{tabular}
     \end{center}
     Observe that $DIS_{pred}(Q,Q'') = DIS_{pred}(Q,Q''')$ and $DIS_{Jaccard}(Q,Q'',k=3) = DIS_{Jaccard}(Q,Q''',k=3)$ 
     However, the resulting ranking of $Q'''(D)$ is $[t_4, t_5, t_7, t_8, t_{10}, t_{11}, t_{12}]$. This refinement includes a new tuple $t_5$ in the output where it ranks second, while in $Q''$, the new tuple included ($t_3$) ranks first in $Q''(D)$. However, we find that $DIS_{Kendall}(Q,Q'', k=3) > DIS_{Kendall}(Q,Q''',k=3)$,
     meaning that $Q'''$ is preferable to $Q''$ according to this measure. %
 \end{example}

    These measures can be combined to formulate new measures that take into account both the queries' predicate distance and the outputs, e.g., using a weighted function.  %

\subsection{Problem Formulation}
Given a query $Q$, a set of cardinality constraints $\constraints$, and a distance measure, our goal is to find a refinement with minimal distance from $Q$ that satisfies the set of constraints. However, we can show that such a refinement may not exist.

\begin{table}%
    \caption{Relation used for proof of \Cref{thm:no-perfect}}
    \label{tab:no-perfect}
    \small
    \footnotesize
    \begin{tabular}{lll}
    \hline
    \textbf{X} & \textbf{Y} & \textbf{Z} \\ \hline
            A           & C               & 6          \\
            A           & D               & 5           \\
            A           & D               & 4           \\
            B           & C               & 3          \\
            A           & C               & 2          \\
            B           & D               & 1          
    \end{tabular}
\end{table}

\begin{theorem}
	\label{thm:no-perfect}
	There exists a dataset $D$, a query $Q$ over $D$, and a constraints set $\constraints$ such that no refinement of $Q$ satisfies $\constraints$.
\end{theorem}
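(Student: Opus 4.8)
The plan is to prove this existentially by exhibiting a single concrete instance $(D, Q, \constraints)$ and arguing that the (finite) space of refinements of $Q$ contains no query whose top-$k$ meets every constraint. I would take $D$ to be the relation in \Cref{tab:no-perfect}, treating $X$ and $Y$ as categorical attributes and $Z$ as the score, so the clause is {\tt ORDER BY} $Z$ {\tt DESC}. For $Q$ I would use a single categorical predicate, say $X = A$, so that the entire refinement space consists of the predicate sets $\{A\}$, $\{B\}$, $\{A, B\}$ (together with the degenerate empty selection). Keeping the predicate categorical and on a single attribute is what makes the refinement space small enough to enumerate by hand.

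The key steps are as follows. First, since refinement never reorders tuples, the top-$k$ of each candidate query is simply the $k$ highest-$Z$ tuples among those it selects; I would tabulate these sets for each of the finitely many refinements. Second, I would choose a constraint set over groups defined by $X$ and $Y$ that every one of these top-$k$ sets violates. Concretely, a pair such as $\lb{X = B}{k=2} = 1$ (at least one $X=B$ tuple in the top-$2$) together with $\ub{Y = C}{k=2} = 0$ (no $Y=C$ tuple in the top-$2$) does the job: the only refinement whose top-$2$ contains an $X=B$ tuple is $\{B\}$, whose top-$2$ is exactly the two $X=B$ rows $(B,C)$ and $(B,D)$ and therefore includes a $Y=C$ tuple, breaking the upper bound; every other refinement admits the higher-scoring $X=A$ rows and so fails the lower bound, and the empty selection fails it as well. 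The final step is to observe that $\constraints$ is \emph{not} vacuously inconsistent: placing the rows $(A,D)$ and $(B,D)$ at the top would satisfy both constraints, so the infeasibility is genuinely a consequence of the coarseness of refinement and the fixed $Z$-order, not of logically contradictory requirements.

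I expect the main obstacle to be the coverage argument, i.e., guaranteeing that the enumeration exhausts the full refinement space and that each case truly fails. Two points need care. One must rule out the extreme refinements (selecting all tuples, or none) and, if one chose to add a numerical predicate on $Z$, every threshold relaxation; the instance must be designed so that these extremes also violate a constraint. One must also verify the non-triviality claim, since it is easy to accidentally pick constraints that are unsatisfiable by \emph{any} ranking (which would still prove the statement but would fail to illustrate that refinement, rather than the constraints, is the source of infeasibility). The design of $Q$ and the choice of which attributes the constraints range over are precisely what reconcile these requirements: the highest-$Z$ tuples force a $Y=C$ member into the top-$2$ under any predicate on $X$, yet no modification of that predicate can isolate the low-scoring $X=B$ rows needed to satisfy the lower bound.
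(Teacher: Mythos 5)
Your proposal is correct and takes essentially the same approach as the paper: both prove existence by exhibiting the concrete relation of \Cref{tab:no-perfect} and exhaustively enumerating the handful of possible categorical refinements, checking that each one's top-$k$ violates some constraint. The paper's instance is a bit leaner---a single lower-bound constraint $\lb{X='B'}{k=3}=2$ against a query whose predicate is on $Y$---and thereby avoids your upper-bound constraint with $n=0$, which, while legitimate for bare constraint satisfaction, would collide with the division by $n$ in the deviation measure of \Cref{def:mospe} used throughout the rest of the paper.
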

\begin{proof}
	We prove this claim by a simple example. Let $Q$ be the query {\tt SELECT * FROM "\Cref{tab:no-perfect}" WHERE Y = 'C' OR Y = 'D' ORDER BY Z DESC}. Let us require that $2$ tuples from group {\tt X ='B'} (or just $B$ for brevity) appear in the top-$3$ of the ranking, i.e., setting $\lb{X='B'}{k=3} = 2$. The original query evaluated over \Cref{tab:no-perfect} selects the entire table, resulting in a ranking with no tuples belonging to $B$ in the top-$3$. There are then only two possible refinements on the original query: {\tt Y = 'C'} or {\tt Y = 'D'}. In both cases, there is only $1$ item of $B$ in the top-$3$.
    Neither the original query nor any of its possible refinements result in a query that satisfies the constraints.%
\end{proof}

\Cref{thm:no-perfect} motivates the need to find a refinement that deviates as little as possible from satisfying the constraint set in the case that exact constraint satisfaction is impossible, which allows us to provide results that are more useful to the user than simply stating its infeasibility.
To measure the deviation from the satisfaction of a given set of constraints $\constraints$, we leverage the notion of the \emph{mean absolute percentage error}, as was done in~\cite{BAM14}. Specifically, we use it to measure the deviation from the constraints over groups in $\constraints$ and their cardinalities in the output of the (refined) query. We modify its definition to not penalize rankings that are above (below) the cardinalities specified in lower (upper) bound constraints for a group. 
\begin{definition}[Deviation]%
\label{def:mospe}
Recall that $Q(D)_k$ denotes the top-$k$ tuples in the output of the query $Q$ over a database $D$.  The deviation between $\constraints$ and $Q$, $DEV(Q(D), \constraints)$ is given by
\begin{align*}
 \frac{1}{|\constraints|} \sum_{(\mathscr{c}_{G, k} = n) \in \constraints} \frac{\max \left(\sign(\mathscr{c}) \cdot (n - |Q(D)_k \cap G|), 0 \right)}{n} 
\end{align*}
where $\sign(\mathscr{c})$ is $1$ for lower-bound constraints ($\ell$) and $-1$ for upper-bound constraints ($\mathscr{u}$). Larger values represent a larger violation of the constraint set.
\end{definition}

When computing deviation, we assume that the output of $Q(D)$ has at least the number of tuples of the largest $k$ with a constraint in $\constraints$. We refer to this quantity throughout as $k^*$.
We are now ready to formally define the \problem{} problem.

\begin{definition}[\problem{}]
\label{def:best-approx-refinement}
    Given a database $D$, a query $Q$, a constraint set $\constraints$, a maximum deviation from the constraint set $0 \leq \varepsilon$ and a distance measure $DIS : Q \times \mathcal{R} \times k \rightarrow \mathbb{R}$, the answer to the \problem{} problem is the refinement $Q'$ in
    \begin{align*}
        \argmin_{Q' \in \mathcal{R}}\ DIS(Q, Q', k) \ \ \textrm{such that} \ \ & DEV(Q'(D), \constraints) \leq \varepsilon
    \end{align*}
    where $\mathcal{R}$ is the set of possible refinements of $Q$ that have at least $k^*$ tuples in in their output. %
    Note that the $k$ parameter is optional in the distance measure (e.g., $DIS_{pred}$ does not include it). %
    A special value is returned if there is no refinement $Q'$ with constraint set deviation at most $\varepsilon$. %
\end{definition}

\problem{} provides the most similar (according to the given similarity definition) refinement with an acceptable deviation from satisfying the constraint set.

\subsection{Hardness}

We can show that this problem is {\sf NP-hard}.

\begin{theorem}
\label{thm:hardness}
    \problem{} is {\sf NP-hard}.
\end{theorem}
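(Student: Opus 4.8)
The plan is to establish NP-hardness of the decision version of \problem{} --- given all the inputs together with a distance threshold $\delta$, decide whether some refinement $Q' \in \mathcal{R}$ satisfies both $DEV(Q'(D), \constraints) \le \varepsilon$ and $DIS(Q,Q',k) \le \delta$ --- by a polynomial-time reduction from \textsc{Set Cover}, which is NP-complete. Recall a \textsc{Set Cover} instance: a universe $U = \{u_1, \dots, u_n\}$, a family $S_1, \dots, S_m \subseteq U$, and a budget $\ell$; the question is whether $\ell$ of the sets suffice to cover $U$. The combinatorial core I want to exploit is that refining a categorical predicate $\bigvee_{c \in C} A = c$ amounts to choosing an arbitrary subset of $\domain(A)$ (the paper permits adding and removing values), and each chosen value drags a fixed collection of tuples into the candidate pool --- exactly the ``pick sets to cover elements'' structure.

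For the construction I would build a database with one categorical attribute $A$ whose domain is $\{v_0, v_1, \dots, v_m\}$ ($v_i$ representing $S_i$, and $v_0$ a base value), a second categorical attribute $B$ whose domain contains $U$, and a numeric scoring attribute. For every membership $u_j \in S_i$ I add a tuple $t_{ij}$ with $A = v_i$ and $B = u_j$; I also add ``filler'' tuples carrying $A = v_0$, a $B$-value outside $U$, and the lowest scores. The original query $Q$ selects $A = v_0$ and orders by score, so refinements of its categorical predicate range over all subsets $T \subseteq \{v_1,\dots,v_m\}$ adjoined to $v_0$. For each element $u_j$ I impose the lower-bound constraint $\lb{B = u_j}{k} = 1$, set $\varepsilon = 0$, choose $k$ (and enough filler) so that every selected element tuple lands in the top-$k$ while filler occupies the remaining slots, and take $DIS = DIS_{pred}$ with threshold $\delta = 1 - \frac{1}{1+\ell}$.

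For the correspondence, a refinement selecting value set $T$ places tuple $t_{ij}$ in the output iff $v_i \in T$, so the group $\{B = u_j\}$ meets its lower bound iff some selected set contains $u_j$; hence $DEV(Q'(D),\constraints) = 0$ exactly when $T$ covers $U$. Because the numeric term of $DIS_{pred}$ is $0$ and the Jaccard term on $A$ equals $1 - \frac{1}{1 + |T|}$ (keeping $v_0$ is always optimal, as dropping it forces the Jaccard distance to $1 > \delta$), the constraint $DIS \le \delta$ is equivalent to $|T| \le \ell$. Thus a feasible refinement within distance $\delta$ exists iff $U$ admits a cover of size at most $\ell$, and the reduction is clearly polynomial, yielding \Cref{thm:hardness}.

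I expect the main obstacle to be technical rather than conceptual: faithfully neutralizing the \texttt{ORDER BY} semantics and the requirement that every member of $\mathcal{R}$ contain at least $k^*$ tuples. I must argue that the ranking never lets a filler or ``wrong-group'' tuple interfere with a group count inside the top-$k$, and that the filler tuples guarantee each relevant refinement is admissible, so that the discrete set-cover structure is preserved exactly. A secondary point to verify is that restricting attention to the single natural measure $DIS_{pred}$ (with a trivial numeric predicate) is legitimate, since NP-hardness of one instantiation of the distance family already establishes hardness of the general \problem{} problem; alternatively, adding matching upper bounds $\ub{B=u_j}{k}=1$ turns the reduction into one from \textsc{Exact Cover}, making even the feasibility question (with $\varepsilon = 0$) hard independently of the chosen distance measure.
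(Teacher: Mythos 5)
Your reduction is correct and is essentially the same construction as the paper's, which reduces from \textsc{Vertex-Cover} (a special case of \textsc{Set Cover}): categorical values play the role of the covering objects, each element/edge gets a lower-bound constraint of $1$ with $\varepsilon = 0$, a dummy base value kept in the predicate both normalizes the Jaccard distance to $1 - \frac{1}{|T|+1}$ and supplies low-ranked filler tuples guaranteeing $k^*$ outputs, and the budget translates to the threshold $1 - \frac{1}{\ell+1}$. The technical caveats you flag (ordering the filler below the real tuples, and legitimacy of fixing $DIS_{pred}$) are handled in the paper exactly as you propose.
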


The proof is based on a reduction from {\sc Vertex-Cover}, a well-known {\sf NP-complete} decision problem~\cite{K72}. To this end, we define the following corresponding decision problem. Given a database $D$, a query $Q$, a constraint set $\constraints$, a maximum deviation from the constraint set $0 \leq \varepsilon$, a value $k$, a distance measure $DIS : Q \times \mathcal{R} \times k \rightarrow \mathbb{R}$ and a maximum distance $\delta \geq 0$, determine whether there exists a refinement $Q'\in \mathbb{R}$ such that $DEV(Q'(D), \constraints) \leq \varepsilon$ and $DIS(Q, Q', k)~\leq~\delta$. 

An input to the {\sc Vertex-Cover} problem consists of an undirected graph $G = (V, E)$ and a number $S$, and the goal is to determine whether there exists a {\it vertex cover}, i.e., a subset of vertices $V' \subseteq V$ such that for every edge $(u, v)$ in $E$, one or both of its endpoints, $u$ and $v$, are in $V'$ and $|V'| \leq S$.

\begin{proof}
    We reduce {\sc Vertex-Cover} to the stated decision version of our problem. Let $G = (V, E)$ be the graph for which we want to decide whether a vertex cover of at most $S$ vertices exists. In order to construct an instance for our decision problem, we first construct a database $D$ with a relation {\tt Edges} with categorical attributes {\tt Edge} and {\tt Vertex}, and a numerical attribute {\tt IsDummy}. For each edge $(u, v)$ in $E$, we label it uniquely with an integer $i$ from $[1\ ..\ |E|]$ and insert into {\tt Edges} two tuples: $(i, u, 0)$ and $(i, v, 0)$. In order to handle some caveats, we also include a dummy vertex $\star$ which does not exist in $V$, and create a tuple $(|E| + i, \star, 1)$ for all $i$ in $[1\ .. \ 2 \cdot |E|]$ which we will call dummy edges. Then, let the query $Q$ that we refine be {\tt SELECT * FROM Edges WHERE Vertex = $\star$ ORDER BY IsDummy ASC}. Its output is a ranking consisting of any of the tuples generated by the edges in $E$ with an endpoint in any of the vertices in the {\tt Vertex} predicate (or the dummy edges not in $E$), which before refinement consists solely of $\star$. We then build a constraint set $\constraints$ with the constraints $\lb{Edge = e}{k=2 \cdot |E|} = 1$ for each edge label $e$ in $[1\ ..\ |E|]$. Such constraints are perfectly satisfied when at least one of the tuples generated by its associated edge is in the ranking. Note that our constraints are over the top-$k$ where $k$ is such that it includes all the data in {\tt Edges} (of size $2 \cdot |E|$), as all real edges are guaranteed to be in the top-$(2 \cdot |E|)$ due to the {\tt ORDER BY} clause of the query. Then, there exists a vertex cover of size at most $S$ for $G$ if and only if there exists a refinement $r$ for $Q$ over $D$ with (input) distance of at most $1 - \frac{1}{S + 1}$ and deviation from $\constraints$ of at most zero. We now show both directions of our equivalence. \\
    \indent \underline{Vertex cover $\Rightarrow$ refinement:} Let $V'$ be a vertex cover for $G$ of size at most $S$. Then, we may refine the {\tt Vertex} predicate of $Q$ to include all vertices of $V'$ and the dummy vertex $\star$ (if not included, the intersection in the Jaccard distance would be empty, making the distance $1$). This refinement $r$ has input distance $1 - \frac{|\{ \star \} \cap (V' \cup \{ \star \})|}{|\{ \star \} \cup (V' \cup \{ \star \})|} = 1 - \frac{1}{|V'| + 1}$ which is at most $1 - \frac{1}{S + 1}$. Now $r$ is a refinement over $Q$ that selects tuples generated by edges in $E$ with endpoints in $V'$. The resulting ranking $r(D)$ must contain at least one of the tuples generated by each edge in $E$ as we suppose that $V'$ is a vertex cover, i.e. every edge in $E$ has an endpoint in $V'$. The resulting ranking also has at least $2 \cdot |E|$ tuples due to the dummy edges selected by including the dummy vertex (an assumption required for calculating deviation over $r(D)$). Even though the dummy vertex is selected, all the dummy edges will appear {\it after} the real edges, making it so that the real edges will always be in the first $2 \cdot |E|$ tuples. Therefore, as all edges have at least one of their tuples selected and they are guaranteed to be in the top-$(2 \cdot |E|)$, the resulting ranking has that $1 \leq |\sigma_{Edge=e}(r(D)_{2\cdot|E|})| \leq 2$ for each edge label $e$ in $[1\ ..\ |E|]$. Since this is the case, all constraints are then perfectly satisfied, resulting in zero deviation. Therefore, if there is a vertex cover with size at most $S$ for $G$, then there exists a refinement over our query $Q$ and constraint set $\constraints$ such that the input distance and deviation are within our stated bounds. \\
    \indent \underline{Refinement $\Rightarrow$ vertex cover:} Let $r$ be a refinement over $Q$ and $D$ with a deviation from $\constraints$ of at most zero and a distance of at most $1 - \frac{1}{S + 1}$. Note that the vertices included in the {\tt Vertex} predicate of the refined query $r$ must include the dummy vertex $\star$, otherwise the refinement's input distance would always be 1 for the same reason mentioned above. This dummy vertex may always be safely excluded after finding the refinement, as it does not contribute anything meaningful to the output of $r(D)$ as no edges in $E$ have it as an endpoint. Therefore, we denote $V'$ as the vertices included in the {\tt Vertex} predicate of $r$ without $\star$. If a refinement exists with deviation at most zero then all constraints were perfectly satisfied. Therefore, for each edge label $e$ in $[1\ ..\ |E|]$ we have that $1 \leq |\sigma_{Edge=e}(r(D)_{2\cdot|E|})| \leq 2$. This means all edges in $E$ have one of their tuples in the output ranking of $r(D)$, i.e. all edges in $E$ have an endpoint in $V'$, making $V'$ a vertex cover for $G$. Now, we must verify that $V'$ consists of at most $S$ vertices. Because our refinement has distance of at most $1 - \frac{1}{S + 1}$, the size of $V' \cup \{ \star \}$ is bounded by $S + 1$, making the size of $V'$ bounded by $S$. This means that $V'$ contains at most $S$ vertices, and we have already shown that every edge in $E$ has an endpoint in $V'$ by the satisfaction of the constraint set. Therefore, if there exists a refinement over $Q$ and $D$ with (input) distance at most $1 - \frac{1}{S + 1}$ and deviation at most zero, then there exists a vertex cover of at most $S$ vertices for $G$.  
    
    \indent Therefore, we can solve {\sc Vertex-Cover} with an algorithm that solves our decision problem, making it at least as hard as {\sc Vertex-Cover}, which is {\sf NP-complete}.
\end{proof}

\section{Finding The Best Approximation}
\label{sec:search}

Our problem may be solved na\"{\i}vely by an exhaustive search over the possible refinements. However, the search space of refinements becomes intractably large even with relatively modest datasets, as the number of possible refinements is exponential in the number of the query's attributes.
Beyond the high cost of an exhaustive search, a na\"{\i}ve solution would require the evaluation of each refinement query on the DBMS to check its deviation from the constraint set. 

To address these challenges, we propose a solution based on a mixed-integer linear program (MILP). Mixed-integer linear programming is a model for optimizing a linear objective function subject to a set of expressions (equalities and inequalities) linear in the discrete or continuous variables of the problem, limiting the space of feasible assignments. Solvers for such programs have been developed with techniques to solve even large problems efficiently in practice, as discussed in \cite{QFix}.
By incorporating the concepts introduced in~\cite{MLJ22, ERICA}, we utilize data annotations to depict potential refinements. These annotations serve as variables in the MILP, and enable us to quantify the deviation from the constraint set without having to reevaluate refinements across the DBMS.

Briefly, a solution for a MILP is an assignment for the variables in the expressions, such that they are satisfied and the objective function is minimized. 
Intuitively, given a database $D$, a query $Q$, a constraint set $\constraints$, a maximum deviation from the constraint set $0 \leq \varepsilon$, and a distance measure $DIS$, we construct an instance of MILP such that the solution corresponds to a minimal refinement that produces a ranking such that its deviation is within the maximum deviation $\varepsilon$ from the constraint set $\constraints$ while minimizing $DIS$.
By formulating \problem{} as a MILP, we can leverage existing tools to streamline the search process.

It is important to note that by using MILP to represent the problem, we are limited to distance measures that can be modeled by a linear program. However, this limitation still allows a wide range of useful distance measures, including the ones defined in~\Cref{sec:distance}. Some of these measures may require additional modeling techniques to become linearized. For example, when modeling the Jaccard distance, we can use the Charnes-Cooper transformation~\cite{CC62}. Similarly, we can introduce auxiliary binary variables to model the version of Kendall's $\tau$ for top-$k$ lists introduced in \cite{FKS03}. %

\begin{table}[t!]
    \centering
    \caption{Summary of variables used in our MILP model}
    \footnotesize
    \begin{tabularx}{0.48\textwidth}{c|c|X}
    {\bf Var.} & {\bf Domain} & {\bf Description} \\
    \hline
    $C_{A,\diamond}$  & $\mathbb{R}$ & Refined $C$ for a num. predicate on $A$ with operator $\diamond$                     \\
    $A_v$             & $\{0, 1\}$ & Whether a value $v$ is selected by the cat. predicate on $A$                           \\
    $A_{v, \diamond}$ & $\{0, 1\}$ &  Whether a value $v$ is in the range of the num. predicate on $A$ with operator $\diamond$ \\
    $r_t$            & $\{0, 1\}$ & Whether tuple $t$ is selected by the refinement                                     \\
    $s_{t}$          & $\mathbb{R}$ & Rank of tuple $t$ in the ranking generated by the refinement                              \\
    $l_{t, k}$       & $\{0, 1\}$ & Whether tuple $t$ is present in the top-$k$ of the ranking generated by the refinement          \\
    $E_{G, k}$       & $\mathbb{R}$ & Number of tuples to add (remove) to satisfy lower-bound (upper-bound) cardinality constraint    
    \end{tabularx}
    \label{tab:variables}
\end{table}

The MILP instance we construct consists of two main groups of expressions: those that require that all tuples selected by the refinement are in the ranking according to the {\tt ORDER BY} expression of $Q$, and those that enforce that the derived ranking's deviation from the constraint set does not exceed the input bound $\varepsilon$. We next explain the construction of the expressions in each set, and, in order to give a more intuitive picture of the process, demonstrate in \Cref{fig:example_summary} how variables are generated from a running example and how they are combined by these expressions.

\subsection{Modeling refinements output using expressions}

Inspired by the use of provenance for query refinements~\cite{MLJ22,ERICA}, we utilize the notion of data annotations to model refinements through a set of expressions. This set is divided into two parts. The first part is used to model the tuples that satisfy the refinement query's predicate, while the second part ensures that the selected tuples are ordered correctly by the {\tt ORDER BY} expression of the input query. We start by describing the variables used in the expressions.

\begin{table}%
        \centering	
  \caption{$\widetilde{Q}$ obtained from \running{}}
  \hspace*{-1cm}
        \footnotesize
		\begin{tabular}{clccc}
		\cline{2-5}
		& \textbf{ID} & \textbf{Gender} & \textbf{Income} &  $\prov(t)$ \\ \cline{2-5}
		 & $t_1$           & M               & Medium         & $\{Activity_{SO}, GPA_{3.7}\}$\\    
		 & $t_2$           & F               & Low         & $\{Activity_{SO}, GPA_{3.8}\}$\\    
		 & $t_3$           & F               & Low        & $\{Activity_{GD}, GPA_{3.6}\}$\\    
		& $t_4$            & M               & High     & $\{Activity_{RB}, GPA_{3.8}\}$\\
         & $t_4'$           & M               & High & $\{Activity_{TU}, GPA_{3.8}\}$\\
		 & $t_5$           & F               & Medium   & $\{Activity_{MO}, GPA_{3.6}\}$\\    
		& $t_6$            & F               & Low       & $\{Activity_{SO}, GPA_{3.7}\}$ \\    
		 & $t_7$           & M               & Low       & $\{Activity_{RB}, GPA_{3.7}\}$ \\    
		 & $t_8$           & F               & High      & $\{Activity_{RB}, GPA_{3.9}\}$\\
         & $t_{8}'$           & F               & High       & $\{Activity_{TU}, GPA_{3.9}\}$\\
		 & $t_{10}$           & M               & High  & $\{Activity_{RB}, GPA_{3.7}\}$\\    
		 & $t_{11}$           & F               & Low    & $\{Activity_{RB}, GPA_{3.8}\}$\\    
		 & $t_{12}$           & M               & Medium    & $\{Activity_{RB}, GPA_{4.0}\}$\\    
  	 &  $t_{14}$            & F               & Low       & $\{Activity_{RB}, GPA_{3.7}\}$\\ \cline{2-5} 
		\end{tabular}
  \label{tab:joined}
\end{table}

\paragraph*{Variables} Given a query $Q$ and a dataset $D$, for each categorical predicate in $\cat(Q)$ over an attribute $A$, we define a variable $A_{v} \in \{0, 1\}$ for each value $v$ in the domain of $A$ in the $D$. Intuitively, a solution to the MILP where $A_{v} = 1$ corresponds to a refinement that includes $A = v$ in the categorical predicates. For each numerical predicate $A \diamond C \in \num(Q)$, we define a variable $C_{A, \diamond}$ whose value is in the range of values of $A$ in $D$, and a set of variables $A_{v, \diamond}$ for each value $v$ in the domain of $A$ in $D$.

\begin{example}
$Activity_{RB}$ and $Activity_{SO}$ are two of the variables generated by the categorical predicate {\tt Activity = `RB'}
since these values are present in the database. 
The variable $C_{GPA, \geq}$ is generated from the numerical predicate {\tt GPA >= 3.7}. Additionally, the variable $GPA_{3.7, \geq}$ is generated since there exists a tuple in the data with the value $3.7$ in the {\tt GPA} attribute. 
\end{example} 

The value of $C_{A, \diamond}$ represents the value of the constant $C$ in the refinement query, and the variables $A_{v, \diamond}$ are used to determine whether a given tuple $t$ in $D$ (with the value $v$ in $A$) satisfies that predicate over $A$ in the refined query. More concretely, the variable $A_{v, \diamond}$ is used to reflect whether  $v \diamond C_{A, \diamond}$. Finally, we use a variable $r_t$ to denote the existence of a tuple $t$ in the output of a refinement query and a variable $s_t$ to indicate the position of $t$ in the output.

\paragraph*{Expressions}  We formulate a set of expressions such that the assignment generated by a solver to the MILP instance corresponds to the set of tuples selected by the corresponding refinement query. A tuple is part of a query's output if it satisfies its predicates set. We first define expressions for numerical predicates. Intuitively, a tuple $t$ with value $v$ in attribute $A$ satisfies the predicate $A \diamond C_{A, \diamond}$ if $v \diamond C_{A, \diamond}$. For lower-bound predicates, i.e., when $A \geq C$ or $A > C$, we model this using the following MILP expressions for each predicate in $\num(Q)$ and each value $V$ in the domain of $A$ in $D$.%
\begin{align}
\begin{split}
C_{A, \diamond} + M_A \cdot A_{v, \diamond} &\geq v + (1 - {\sf St}(\diamond)) \cdot \delta \label{eq:value_bounds_inline}\\
C_{A, \diamond} - M_A \cdot (1 - A_{v, \diamond}) &\leq v - {\sf St}(\diamond) \cdot \delta
\end{split}
\end{align} 
where $M_A$ is a constant larger than the maximum absolute value in the domain of the attribute $A$ in $D$, ${\sf St}(\diamond)$ is $1$ if $\diamond$ is a strict inequality and $0$ otherwise, and $\delta$ is some small constant added when $\diamond$ is strict in order to relax the inequality as MILP expressions do not support strict inequalities. We choose $\delta$ to be smaller than the smallest pairwise difference between the values in the domain of $A$, ensuring the relaxation does not include another value from the domain. For upper-bound predicates, we instead use the following set of expressions.
\begin{align}
\begin{split}
C_{A, \diamond} - M_A \cdot A_{v, \diamond} &\leq v - (1 - {\sf St}(\diamond)) \cdot \delta \label{eq:value_bounds_inline2}\\
C_{A, \diamond} + M_A \cdot (1 - A_{v, \diamond}) &\geq v + {\sf St}(\diamond) \cdot \delta
\end{split}
\end{align}
 Intuitively, the first expressions in (\ref{eq:value_bounds_inline}) and (\ref{eq:value_bounds_inline2}) ensure that $A_{v, \diamond}$ is $1$ if $v \diamond C_{A, \diamond}$ is true and the second expressions are used to ensure that $A_{v, \diamond}$ is 0 if $v \diamond C_{A, \diamond}$ is false. Together, they enforce that $A_{v, \diamond}$ is $1$ if and only if $v \diamond C_{A, \diamond}$ is true.

\begin{example}
\label{ex:numeric_constraint}
Continuing with our example, the following expressions are generated using the variables $C_{GPA, \geq}$ and $GPA_{3.7, \geq}$ for the numerical predicate {\tt GPA $\geq$ 3.7} in the \running{}.
\begin{align*}
    C_{GPA, \geq} + 5 \cdot GPA_{3.7, \geq} \geq 3.701 \\ 
    C_{GPA, \geq} - 5 \cdot (1 - GPA_{3.7, \geq}) \leq 3.7
\end{align*}
Here $M_A$ is set to $5$, a value greater than any value of the attribute {\tt GPA} in the data, and ${\sf St}(\diamond)$ is $0$ (since the inequality in the predicate is not strict). Consider an assignment that assigns the $3.7$ to $C_{GPA, \geq}$. This assignment corresponds to a query with the predicate {\tt GPA $\geq$ 3.7}. Assigning this value to the above expression results in 
\begin{align*}
    3.7 + 5 \cdot GPA_{3.7, \geq} \geq 3.701 \\ 
    3.7 - 5 \cdot (1 - GPA_{3.7, \geq}) \leq 3.7
\end{align*}
In this case, the value of $GPA_{3.7, \geq}$ should be $1$ as well, indicating that tuples with a value $\geq 3.7$ in the {\tt GPA} attribute meet the condition. Indeed, these expressions can be satisfied if and only if the variable $GPA_{3.7, \geq}$ is assigned the value $1$. Notice that adding $\delta$ in the first expression is necessary in order to guarantee that the only valid assignment for $GPA_{3.7, \geq}$ is $1$.

\end{example}

Next, we construct expressions that model the existence of a tuple in the query's output (represented using the variable $r_t$). The expressions should be able to model any possible refinement. Note that the output of a refinement may include tuples that are not part of the output of the original query. To this end, we use $\widetilde{Q}$ to denote the query obtained from $Q$ by omitting the selection predicates and any {\tt DISTINCT} statement. Intuitively, the output of $\widetilde{Q}$ over $D$ contains the output of every possible refinement query. A tuple $t$ is in the output of a query $Q$ if $t$ satisfies all the predicates in $Q$. 
To indicate whether $t$ is part of the output, we leverage the notion of {\it lineage}. The lineage of a tuple $t\in\widetilde{Q}(D)$ is the set of variables $A_v$ and $A_{v, \diamond}$ that correspond to the values of $t$ for each attribute in $\attr(Q)$: $\prov(t)=\{A_{t.A} \mid \forall \left( \bigvee_{c \in C} A = c \right) \in \cat(Q)\}~ \cup~ \{A_{t.A, \diamond} \mid \forall (A \diamond C) \in \num(Q)\}$. Table~\ref{tab:joined} shows the result of $\widetilde{Q}(D)$ in our running example with the lineage annotation for each tuple.

Since the value of each variable $A_{t.A} = 1$ or $A_{t.A, \diamond} = 1$ indicates the satisfaction of a predicate over $A$ by $t$, a tuple $t$ is in the output of $Q$ if all predicates in $Q$ are true for $t$, i.e., $\sum_{p \in \prov(t)} p = |\{\conds(Q)\}|$. We use this property to construct an expression that models the behavior of $r_t$ for each tuple $t\in\widetilde{Q}(D)$.
Note that tuples appearing once in $Q(D)$ may appear multiple times in  $\widetilde{Q}(D)$. E.g., when using {\tt DISTINCT} selection after a join operation, as the case in the \running{}, where the tuples denoted by $t_4$ and $t'_4$ represent the same student ID (that appears once in the output). To address this case, we define $S(t) = \{ t' \mid t' \in \widetilde{Q}(D),\ \forall a \in {\sf Distinct}(Q)\ t.a = t'.a,\ \widetilde{Q}(D)(t') < \widetilde{Q}(D)(t) \}$ where ${\sf Distinct}(Q)$ is the set of attributes selected distinctly by $Q$. Namely, for a tuple $t$, $S(t)$ is the set of tuples with the same values on attributes selected distinctly that are ranked higher than $t$. For instance, in our example $S(t_4) = t'_4$. Intuitively, at most one of $S(t)\cup\{t\}$ can appear in the output of the refined query (depending on its selection predicates). We therefore add the following expression to the MILP.

\begin{align}
\begin{split}
0 &\leq \sum_{\mathclap{p \in \prov(t)}} p + \sum_{t' \in S(t)} (1 - r_{t'}) - (|\conds(Q)| + |S(t)|) \cdot r_t \label{eq:tuple_in_ranking_inline} \\
  &\leq |\conds(Q)| + |S(t)| - 1
\end{split}
\end{align}
The lower bound of this expression prevents $r_t$ from being assigned $1$ if {\it not} all attributes of the tuple satisfy the predicate of the corresponding refinement or {\it any} tuples sharing its distinct values ranked better than it were already selected. Similarly, the upper bound is used to ensure that $r_t$ is assigned the value $1$ in case that {\it all} of the attributes of the tuple satisfy the predicate of the corresponding refinement and {\it none} of the tuples sharing its distinct values ranked better than it were already selected\footnote{Allowing the same entity (e.g., student ID in our example) appear multiple times in the output can be done by removing {\tt DISTINCT} from the input query.}.

\begin{example}
\label{ex:tuple-in-ranking}
Consider $t_6$ in Table~\ref{tab:joined}.
The lineage of $t_6$ is the set of variables $\{Activity_{SO}, GPA_{3.7, \geq}\}$ and $|\conds(Q)| = 2$, and the set $S(t_6)$ is empty given there is only $1$ tuple with ID $6$ in $\widetilde{Q}(D)$. Thus, the MILP instance has the expression 
$$0 \leq Activity_{SO} + GPA_{3.7, \geq} - 2 \cdot r_{t_6} \leq 1$$
Assuming $GPA_{3.7, \geq} = 1$ and $Activity_{SO} =1$, $r_{t_6}$ must be assigned $1$, indicating that $t_6$ is part of the output in this case.

\end{example}

Given these $r_t$ variables, we may enforce that there at least $k^*$  tuples in the output of the refinement by the expression
\begin{align}
    \sum_{\mathclap{t \in \widetilde{Q}(D)}} r_{t} \geq k^* \label{eq:at_least_k_star_inline}
\end{align}

The last part required to complete the correspondence between the solution to the MILP instance and the output of a refinement query is modeling the order of the output tuples (according to the {\tt ORDER BY} expression of the input query) through the MILP expressions. We use the set of variables $s_t$ for each tuple in $\widetilde{Q}(D)$, which represents the position of $t$ in the output of the corresponding refinement query. Intuitively, the position of a tuple $t$ in the output of the refinement query $Q'$ is one plus the number of tuples $t'\in \widetilde{Q}(D)$ that are part of the output $Q'$ and ranked higher than $t$ (i.e., $\widetilde{Q}(D)(t') < \widetilde{Q}(D)(t)$). For tuples $t$ that are not part of the output of $Q'$, the variable $s_t$ will be assigned a value larger than $|\widetilde{Q}(D)|$. This is modeled using the following set of expressions.
\begin{align}
    1 + |\widetilde{Q}(D)| \cdot (1 - r_t) + \sum_{\mathclap{\substack{t' \in \widetilde{Q}(D),\\ \widetilde{Q}(D)(t') < \widetilde{Q}(D)(t)}}} r_{t'} = s_{t} \label{eq:position_in_ranking_inline}
\end{align}
for each $t$ in $\widetilde{Q}(D)$. Given this, we may further limit $s_t$ to be in the range $[1, 2 \cdot |\widetilde{Q}(D)|]$.

\begin{example}
\label{ex:position}
In our running example $|\widetilde{Q}(D)| = 14$. Thus the expression $1 + 14 \cdot (1 - r_{t_6}) + r_{t_1} + r_{t_2} + r_{t_3} + r_{t_4} + r_{t_4'} + r_{t_5} = s_{t_6}$ is  the expressions generated for the tuple $t_6$. Assuming $r_{t_1}$, $r_{t_2}$, $r_{t_4}$ and $r_{t_6}$ are $1$ (and the rest of the variables are $0$), the value of $s_{t_6}$ must be $4$, indicating its position in the ranking in this case.

\end{example}

\subsection{Bounding Maximum Deviation}

The second part of the solution consists of expressions whose goal is to limit the refinement query's output's deviation from the constraint set $\constraints$ to be at most $\varepsilon$. For each cardinality constraint $\mathscr{c}_{G, k} = n$ in $\constraints$, we are interested in the number of tuples belonging to group $G$ in the top-$k$ of the refined ranking to determine the number of tuples of group $G$ needs to be added or removed to satisfy $\mathscr{c}_{G, k} = n$.
To model this property, we introduce two sets of new variables $l_{t, k}$ and $E_{G, k}$. The variables $l_{t, k}$ are used to indicate whether a tuple $t$ appears in the top-$k$ ranked output of the corresponding refinement query, and $E_{G, k}$ represents the number of tuples from $G$ in the top-$k$ that need to be added (removed) to satisfy lower-bound (upper-bound) cardinality constraints (i.e., $E_{G, k}$ is equivalent to the numerator in the summation of \Cref{def:mospe} for each cardinality constraint). Intuitively, we may further specify that $E_{G, k} \in [0, k]$. 

We use a similar construction to the expressions in (\ref{eq:value_bounds_inline}) to ensure that $l_{t, k} = 1$ if and only if the tuple $t$ appears in the top-$k$ as follows.
\begin{align}
\begin{split}
    s_t + (2 \cdot |\widetilde{Q}(D)| + 1) \cdot l_{t, k} &\geq k + \delta \label{eq:in_prefix_inline} \\
    s_t - (2 \cdot |\widetilde{Q}(D)| + 1) \cdot (1 - l_{t, k}) &\leq k
\end{split}
\end{align}
where $(2 \cdot |\widetilde{Q}(D)| + 1)$ is the constant coefficient that plays the rule of $M_A$ in (\ref{eq:value_bounds_inline}) and $\delta$ is a small additive constant as in (\ref{eq:value_bounds_inline}) and (\ref{eq:value_bounds_inline2}). 

\begin{lemma}
\label{thm:same-top-k}
    Let $D$ be a dataset, $Q$ be a query over $D$, and $\constraints$ be a set of cardinality constraints. There is an assignment $\alpha$ satisfying the expressions generated by (\ref{eq:value_bounds_inline}-\ref{eq:in_prefix_inline}) if and only if there is a refinement $Q'$ of Q such that
    \begin{enumerate}[label=\numbercircled{\arabic*}]
        \item For each $(\bigvee_{c \in C} A = c) \in \cat(Q')$, $\alpha(A_c) = 1 \iff c \in C$
        \item For each $(A \diamond C) \in \num(Q')$, $C_{A, \diamond} = C$
        \item For each $(\mathscr{c}_{G, k} = n) \in \constraints$, $t \in Q'(D)_k \cap G \iff \alpha(l_{t, k}) = 1$
    \end{enumerate}
\end{lemma}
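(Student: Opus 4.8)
The plan is to prove the biconditional by exhibiting a tight correspondence between refinements $Q'$ and satisfying assignments $\alpha$, in which the \emph{decision} variables, namely the categorical membership variables $A_v$ and the numerical constants $C_{A,\diamond}$, carry all the degrees of freedom, while every remaining variable ($A_{v,\diamond}$, $r_t$, $s_t$, $l_{t,k}$) is uniquely forced by the expressions. Given any refinement $Q'$, properties \numbercircled{1} and \numbercircled{2} dictate exactly how to set the decision variables, and conversely any assignment to these variables reads off a refinement satisfying \numbercircled{1} and \numbercircled{2}; this part is immediate from the definition of a refinement. The substance of the lemma is therefore to show that, once the decision variables are fixed, (a) there is a unique completion of the remaining variables satisfying (\ref{eq:value_bounds_inline}--\ref{eq:in_prefix_inline}), and (b) this completion realizes the intended semantics, culminating in \numbercircled{3}.

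I would establish (b) by walking through the expression groups in dependency order. First, (\ref{eq:value_bounds_inline}) and (\ref{eq:value_bounds_inline2}) force $A_{v,\diamond}=1$ exactly when $v \diamond C_{A,\diamond}$ holds; this is the big-$M$ indicator argument already sketched around Example~\ref{ex:numeric_constraint}, with the $\delta$ slack handling strict inequalities and chosen small enough not to admit another domain value. Consequently, for each $t \in \widetilde{Q}(D)$, the lineage sum $\sum_{p\in\prov(t)} p$ equals $|\conds(Q)|$ precisely when $t$ satisfies every selection predicate of $Q'$. Substituting this fact into (\ref{eq:tuple_in_ranking_inline}) forces $r_t = 1$ iff $t$ passes all predicates \emph{and} no higher-ranked duplicate in $S(t)$ was already selected, which is exactly the semantics of the output of $Q'$ under {\sf Distinct}.

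The main obstacle is precisely this \textsf{DISTINCT} bookkeeping encoded through the sets $S(t)$, since the bound on $r_t$ in (\ref{eq:tuple_in_ranking_inline}) refers to the variables $r_{t'}$ of strictly higher-ranked duplicates. I would resolve it by noting that $S(t)$ contains only tuples $t'$ with $\widetilde{Q}(D)(t') < \widetilde{Q}(D)(t)$, so the dependency is well-founded along the fixed global order of $\widetilde{Q}(D)$; processing tuples in increasing rank then shows that $r_t$ is determined without circularity, and that exactly one representative of each group of tuples agreeing on ${\sf Distinct}(Q)$ is selected, matching the \textsf{DISTINCT} semantics of $Q'(D)$. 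With the $r_t$ fixed, expression (\ref{eq:position_in_ranking_inline}) assigns $s_t = 1 + |\{t' : r_{t'}=1,\ \widetilde{Q}(D)(t') < \widetilde{Q}(D)(t)\}|$ to every selected tuple and a value exceeding $|\widetilde{Q}(D)|$ to the rest, so $s_t$ is exactly the rank of $t$ in $Q'(D)$ whenever $t$ is selected.

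Finally, the big-$M$ indicator (\ref{eq:in_prefix_inline}) forces $l_{t,k}=1$ iff $s_t \leq k$, i.e. iff $t \in Q'(D)_k$, which specializes to property \numbercircled{3} for the tuples of each group $G$ named in $\constraints$. For the forward (``assignment $\Rightarrow$ refinement'') direction I then observe that the forced completion described above is always feasible and satisfies the count expression (\ref{eq:at_least_k_star_inline}) exactly when $Q'$ has at least $k^*$ tuples in its output; hence satisfiability of (\ref{eq:value_bounds_inline}--\ref{eq:in_prefix_inline}) and the existence of a conforming refinement coincide, closing the equivalence.
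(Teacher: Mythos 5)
Your proposal is correct and follows essentially the same route as the paper's proof: both directions are established by tracing the expression groups in dependency order --- (\ref{eq:value_bounds_inline})/(\ref{eq:value_bounds_inline2}) force the $A_{v,\diamond}$ indicators, (\ref{eq:tuple_in_ranking_inline}) forces $r_t$ to match the \textsf{DISTINCT} selection semantics, (\ref{eq:position_in_ranking_inline}) forces $s_t$ to the rank, and (\ref{eq:in_prefix_inline}) forces $l_{t,k}$ --- exactly as the paper does. Your explicit observation that the $S(t)$ dependency is well-founded along the ranking order (so the $r_t$ values are determined without circularity), and your remark on expression (\ref{eq:at_least_k_star_inline}), are small refinements the paper leaves implicit, but they do not change the argument.
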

\begin{proof}
    \underline{Assignment $\Longrightarrow$ refinement:} We may directly obtain from this assignment a refinement that satisfies properties \numbercircled{1} and \numbercircled{2}. Now, we only have left to show that this refinement $Q'$ has property \numbercircled{3}. By expression (\ref{eq:tuple_in_ranking_inline}), we have that for a tuple $t \in \widetilde{Q}(D)$ that $\alpha(r_t) = 1$ if and only if $t$ matches the predicates of $Q'$ and no tuple ranking better than $t$ sharing its same distinct values is selected by $Q'$ (i.e. $t \in Q'(D)$). Recall that by expression (\ref{eq:position_in_ranking_inline}), for any tuple $t \in Q'(D)$ we have that $\alpha(s_t) = Q'(D)(t)$ as $\alpha(r_t) = 1$. Finally, by expression (\ref{eq:in_prefix_inline}), we have for any tuple $t \in Q'(D)$ that $\alpha(l_{t, k}) = 1$ if and only if $\alpha(s_t) \leq k$. This gives that for any tuple $t$ in $\widetilde{Q}(D)$ that $t \in Q'(D)_k \iff \alpha(l_{t, k}) = 1$.  Given that for any $(\mathscr{c}_{G, k} = n) \in \constraints$, $Q'(D)_k \cap G \subseteq Q'(D)_k$, we have that $Q'$ has property \numbercircled{3} as desired. This shows that an assignment satisfying  the expressions generated by (\ref{eq:value_bounds_inline}-\ref{eq:in_prefix_inline}) produces a refinement with the stated properties. \\
    \indent \underline{Refinement $\Longrightarrow$ assignment:} For each $(A \diamond C) \in \num(Q')$ and each value $v$ in the domain of $A$, we make the assignment $\alpha(A_{v, \diamond}) = 1$ if and only if $v \diamond C$, and assign $\alpha(A_{v, \diamond}) = 0$ otherwise. Such assignments make the expressions generated by (\ref{eq:value_bounds_inline}) and (\ref{eq:value_bounds_inline2}) feasible. Then, for a tuple $t \in \widetilde{Q}(D)$, we make an assignment such that $\alpha(r_t) = 1 \iff t \in Q'(D)$, and $\alpha(r_t) = 0$ otherwise. These assignments make the expressions generated by (\ref{eq:tuple_in_ranking_inline}) feasible as for $t \in Q'(D)$, we have that $t$ matches the predicates of $Q'$ and none of the tuples ranking better than $t$ sharing its distinct values are selected by $Q'$ as well, making $\alpha(r_t) = 1$ the only feasible assignment for such a case. Otherwise, we have a tuple that is in $\widetilde{Q}(D)$ but not $Q'(D)$, so it either did not match at least one of the predicates of $Q'$ or at least one of the tuples ranking better than it sharing its distinct values was selected by $Q'$, therefore $\alpha(r_t) = 0$ is the only feasible assignment. We now make the assignments $\alpha(s_t) = Q'(D)(t)$ for each tuple $t \in Q'(D)$, and $\alpha(s_t) = 1 + |\widetilde{Q}(D)| + \sum_{t' \in \widetilde{Q}(D),~\widetilde{Q}(D)(t') < \widetilde{Q}(D)(t)} r_{t'}$ for each tuple $t \in \widetilde{Q}(D) \setminus Q'(D)$. These assignments are feasible for expression (\ref{eq:position_in_ranking_inline}) given that for a tuple in $Q'(D)$, its position $Q'(D)(t)$ is equal to $1 + \sum_{t' \in \widetilde{Q}(D),~\widetilde{Q}(D)(t') < \widetilde{Q}(D)(t)} r_{t'}$. For each tuple $t \in \widetilde{Q}(D) \setminus Q'(D)$, the feasibility of the assignment $\alpha(s_t)$ is immediate. Finally, for each tuple $t \in \widetilde{Q}(D)$ and $(\mathscr{c}_{G, k} = n) \in \constraints$, we make the assignment $\alpha(l_{t, k}) = 1$ if and only if $t \in Q'(D)_k$, and $\alpha(l_{t, k}) = 0$ in any other case. While we assume some of these assignments from property \numbercircled{3}, we must make these assignments to ensure that the expressions generated by \ref{eq:in_prefix_inline} are feasible for tuples in $Q'(D)_k$, but not in any group with a constraint in $\constraints$. This shows that we may construct an assignment satisfying the expressions generated by (\ref{eq:value_bounds_inline}-\ref{eq:in_prefix_inline}) from a refinement with the stated properties.
\end{proof}

We utilize the variables $l_{t, k}$ to determine the values of $E_{G, k}$ using the following expressions for each cardinality constraint $\mathscr{c}_{G, k} = n$~in~$\constraints$.
\begin{align}
\begin{split}
E_{G, k} &\geq 0 \label{eq:tuples_to_satisfy_inline} \\
E_{G, k} &\geq \sign(\mathscr{c}) \cdot \left (n - \sum_{t \in \widetilde{Q}(D)\cap G} l_{t, k} \right)
\end{split}
\end{align}
where $\sum_{t \in \widetilde{Q}(D)\cap G} l_{t, k}$ is the number of tuples belonging to group $G$ in the top-$k$.

Finally, to restrict the deviation of the refinement's output to at most $\varepsilon$, we construct the following expression
\begin{align}
\frac{1}{|\constraints|}\sum_{(\mathscr{c}_{G, k} = n) \in \constraints} \frac{E_{G, k}}{n} \leq \varepsilon \label{eq:max_deviation_inline}
\end{align}

\begin{example}
\label{ex:deviation}
Consider again the database shown in \Cref{ex:running} and the cardinality constraint $\lb{Gender=`Female'}{k=6} = 3$. %
Tuples $t_2$, $t_3$, $t_5$, $t_6$, $t_8$, $t_8'$, $t_{11}$, and $t_{14}$ are in the group  \{Gender=`Female'\}. Thus, we generate the expressions
\begin{align*}
\begin{split}
E_{Gender=`Female', 6} \geq&~ 0 \\
E_{Gender=`Female', 6} \geq& ~3 - (l_{t_2, 6}+l_{t_3, 6}+l_{t_5, 6}+l_{t_6, 6}\\
&~+l_{t_8, 6}+l_{t_8', 6}+l_{t_{11}, 6}+l_{t_{14}, 6})
\end{split}
\end{align*}
where the value of $l_{t_6, 6}$, for instance, is used in the expressions
\begin{align*}
    s_{t_6} + 25 \cdot l_{t_6, 6} &\geq 6.001 \\
    s_{t_6} - 25 \cdot (1 - l_{t_6, 6}) &\leq 6
\end{align*}
Continuing Example~\ref{ex:position}, assuming $s_{t_6}$ is assigned the value $4$, forcing the assignment of $1$ to $l_{t_6, 6}$. Assuming $s_{t_2} = 2$ and $s_{t_8} = 6$, we would similarly get that the value of $l_{t_2, 6}$ and $l_{t_8, 6}$ must be $1$. Using these values in expression generated by~(\ref{eq:tuples_to_satisfy_inline}) results in 
\begin{align*}
E_{Gender=`Female', 6} &\geq 0 \\
E_{Gender=`Female', 6} &\geq 3 - (1+1+1) = 0
\end{align*}
This intuitively means that no additional tuples from the group {\tt {Gender=`Female'}} are required to satisfy the constraint.

\end{example}

We summarize our mixed-integer linear program in \Cref{fig:milp}, and its variables in \Cref{tab:variables}. %
By satisfying all of these expressions together, we produce rankings that are both valid and sufficiently satisfactory of the constraint set. In fact, we can show that any satisfying assignment $\alpha$ to the variables in the expressions generated by (\ref{eq:value_bounds_inline})-(\ref{eq:max_deviation_inline}) corresponds to a valid refinement that is sufficiently satisfactory.
\begin{theorem}[Solution correctness]
\label{thm:bounded-deviation}
    Let $D$ be a dataset, $Q$ a query over $D$, $\constraints$ be a set of cardinality constraints, and $\varepsilon$ a threshold over the deviation from $\constraints$. There is an assignment $\alpha$ satisfying the expressions generated by (\ref{eq:value_bounds_inline}-\ref{eq:max_deviation_inline}) if and only if there is a refinement $Q'$ for $Q$ such that 
    \begin{enumerate}[label=\numbercircled{\arabic*}]
        \item For each $(\bigvee_{c \in C} A = c) \in \cat(Q')$, $\alpha(A_c) = 1 \iff c \in C$
        \item For each $(A \diamond C) \in \num(Q')$, $\alpha(C_{A, \diamond}) = C$
        \item $DEV(Q'(D), \constraints) \leq \frac{1}{\constraints}\sum_{(\mathscr{c}_{G, k} = n) \in \constraints} \frac{\alpha(E_{G, k})}{n} \leq \varepsilon$
    \end{enumerate}
\end{theorem}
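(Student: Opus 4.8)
The plan is to bootstrap off Lemma~\ref{thm:same-top-k}, which already establishes the equivalence between satisfying assignments of the expressions (\ref{eq:value_bounds_inline}--\ref{eq:in_prefix_inline}) and refinements $Q'$ obeying properties \numbercircled{1} and \numbercircled{2} together with the top-$k$ membership correspondence $t \in Q'(D)_k \iff \alpha(l_{t,k}) = 1$. Since those expressions and the first two properties carry over verbatim, the only genuinely new work is to relate the variables $E_{G,k}$ governed by the new expressions (\ref{eq:tuples_to_satisfy_inline}) and (\ref{eq:max_deviation_inline}) to the deviation quantity of Definition~\ref{def:mospe}, and thereby derive property \numbercircled{3}. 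I would prove both directions of the iff by composing the corresponding direction of Lemma~\ref{thm:same-top-k} with this extra bookkeeping.

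For the forward direction (assignment $\Rightarrow$ refinement), I would first apply Lemma~\ref{thm:same-top-k} to extract a refinement $Q'$ satisfying \numbercircled{1}, \numbercircled{2} and the $l_{t,k}$ correspondence. The key identity is that, for each constraint $\mathscr{c}_{G,k} = n$, summing the correspondence over the group gives $\sum_{t \in \widetilde{Q}(D) \cap G} \alpha(l_{t,k}) = |Q'(D)_k \cap G|$. Feeding this into (\ref{eq:tuples_to_satisfy_inline}), the two lower bounds $E_{G,k} \geq 0$ and $E_{G,k} \geq \sign(\mathscr{c}) \cdot (n - |Q'(D)_k \cap G|)$ together force $\alpha(E_{G,k}) \geq \max(\sign(\mathscr{c}) \cdot (n - |Q'(D)_k \cap G|),\, 0)$, which is exactly the per-constraint numerator in Definition~\ref{def:mospe}. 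Dividing by $n$, summing over $\constraints$, and averaging yields $DEV(Q'(D), \constraints) \leq \frac{1}{|\constraints|} \sum \frac{\alpha(E_{G,k})}{n}$, while (\ref{eq:max_deviation_inline}) supplies the upper bound $\frac{1}{|\constraints|}\sum \frac{\alpha(E_{G,k})}{n} \leq \varepsilon$; chaining these gives property \numbercircled{3}.

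For the converse (refinement $\Rightarrow$ assignment), I would invoke the reverse direction of Lemma~\ref{thm:same-top-k} to construct an assignment satisfying (\ref{eq:value_bounds_inline}--\ref{eq:in_prefix_inline}) with the $l_{t,k}$ correspondence, and then set each $\alpha(E_{G,k})$ to the tightest feasible value, namely the deviation numerator $\max(\sign(\mathscr{c}) \cdot (n - |Q'(D)_k \cap G|),\, 0)$. This choice satisfies (\ref{eq:tuples_to_satisfy_inline}) by construction, since it is nonnegative and attains the larger of the two lower bounds, and it makes $\frac{1}{|\constraints|}\sum \frac{\alpha(E_{G,k})}{n}$ equal to $DEV(Q'(D), \constraints)$, which is at most $\varepsilon$ by hypothesis, so (\ref{eq:max_deviation_inline}) holds as well.

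The main subtlety I would watch out for is the $\max(\cdot, 0)$ structure and the asymmetry introduced by $E_{G,k}$ being only lower-bounded. In the forward direction a solver is free to assign $E_{G,k}$ strictly larger than the deviation numerator, which is precisely why property \numbercircled{3} is stated as a two-sided chain rather than an equality: the middle term can exceed $DEV$, yet expression (\ref{eq:max_deviation_inline}) still caps it at $\varepsilon$, so $DEV(Q'(D),\constraints) \leq \varepsilon$ is preserved regardless. I would also take care that the $l_{t,k}$ correspondence from Lemma~\ref{thm:same-top-k} is applied over all $t \in \widetilde{Q}(D) \cap G$, so the group count is exact, and that the $\sign(\mathscr{c})$ term correctly distinguishes lower- from upper-bound constraints when matching the two inequalities in (\ref{eq:tuples_to_satisfy_inline}) against the numerator of Definition~\ref{def:mospe}.
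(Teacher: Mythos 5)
Your proposal is correct and follows essentially the same route as the paper's proof: both directions reduce to Lemma~\ref{thm:same-top-k}, use the identity $|Q'(D)_k \cap G| = \sum_{t \in \widetilde{Q}(D) \cap G}\alpha(l_{t,k})$ to connect the $l_{t,k}$ correspondence to the group counts, and then match expressions (\ref{eq:tuples_to_satisfy_inline}) and (\ref{eq:max_deviation_inline}) against Definition~\ref{def:mospe}. If anything, your treatment is slightly more explicit than the paper's about why the $\max(\cdot,0)$ lower bounds only force $\alpha(E_{G,k})$ to dominate the deviation numerator (hence the two-sided chain in property \numbercircled{3}) and about choosing the tightest feasible $\alpha(E_{G,k})$ in the converse.
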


\begin{proof}
    \underline{Assignment $\Longrightarrow$ refinement:} By \Cref{thm:same-top-k}, there is a refinement with properties \numbercircled{1} and \numbercircled{2} as expressions (\ref{eq:value_bounds_inline}-\ref{eq:in_prefix_inline}) are satisfied by assumption. This refinement also has the property that for each $(\mathscr{c}_{G, k} = n) \in \constraints$, $t \in Q'(D)_k \cap G \iff \alpha(l_{t, k}) = 1$. Therefore, we have as a corollary that for each $(\mathscr{c}_{G, k} = n) \in \constraints$, $|Q'(D)_k \cap G| = \sum_{t \in \widetilde{Q}(D) \cap G}\alpha(l_{t, k})$. Given this fact, then by \Cref{def:mospe} and our assumption that the assignment satisfies expressions (\ref{eq:tuples_to_satisfy_inline}-\ref{eq:max_deviation_inline}), this refinement also has property \numbercircled{3}. Therefore, there is a refinement with properties \numbercircled{1}-\numbercircled{3} given an assignment satisfying expressions (\ref{eq:value_bounds_inline}-\ref{eq:max_deviation_inline}).\\
    \indent \underline{Refinement $\Longrightarrow$ assignment:} By assigning $\alpha(l_{t, k})$ to each $l_{t, k}$ variable such that for each $(\mathscr{c}_{G, k} = n) \in \constraints$, $\alpha(l_{t, k}) = 1$ iff $t \in Q'(D)_k$ (further noting that $Q'(D)_k \cap G \subseteq Q'(D)$), and assuming properties \numbercircled{1} and \numbercircled{2}, we have by \Cref{thm:same-top-k} that there is an assignment satisfying expressions (\ref{eq:value_bounds_inline}-\ref{eq:in_prefix_inline}). Expression (\ref{eq:max_deviation_inline}) is satisfied by assumption, and expression (\ref{eq:tuples_to_satisfy_inline}) is therefore satisfied given \Cref{def:mospe} and that for $(\mathscr{c}_{G, k} = n) \in \constraints$ we have $|Q'(D)_k \cap G| = \sum_{t \in \widetilde{Q}(D) \cap G}\alpha(l_{t, k})$. Therefore, there is an assignment satisfying expressions (\ref{eq:value_bounds_inline}-\ref{eq:max_deviation_inline}) given a refinement with properties \numbercircled{1}-\numbercircled{3}. 
\end{proof}

\begin{figure}
    \centering
    \footnotesize
    \begin{align*}
        \min \quad & DISTANCE && \notag \\
        \textrm{s.t. } \quad & C_{A, \diamond} + M_A \cdot A_{v, \diamond} \geq v + (1 - {\sf St}(\diamond)) \cdot \delta && \forall (A \diamond C) \in \num^{>}(Q) \\
        & C_{A, \diamond} - M_A \cdot (1 - A_{v, \diamond}) \leq v - {\sf St}(\diamond) \cdot \delta && \forall (A \diamond C) \in \num^{>}(Q) \\
        & C_{A, \diamond} - M_A \cdot A_{v, \diamond} \leq v - (1 - {\sf St}(\diamond)) \cdot \delta && \forall (A \diamond C) \in \num^{<}(Q) \\
        & C_{A, \diamond} + M_A \cdot (1 - A_{v, \diamond}) \geq v + {\sf St}(\diamond) \cdot \delta && \forall (A \diamond C) \in \num^{<}(Q) \\ 
        & \begin{aligned}
            0 &\leq \sum_{\mathclap{p \in \prov(t)}} p + \sum_{t' \in S(t)} (1 - r_{t'})\\
              &- (|\conds(Q)| + |S(t)|) \cdot r_t \\
              &\leq |\conds(Q)| + |S(t)| - 1
        \end{aligned}&& \forall t \in \widetilde{Q}(D)  \\
        & \sum_{\mathclap{t \in \widetilde{Q}(D)}} r_{t} \geq k^* \\
        & 1 + |\widetilde{Q}(D)| \cdot (1 - r_t) + \sum_{\mathclap{\substack{t' \in \widetilde{Q}(D),\\ \widetilde{Q}(D)(t') < \widetilde{Q}(D)(t)}}} r_{t'} = s_{t} && \forall t \in \widetilde{Q}(D) \\
        & s_t + (2 \cdot |\widetilde{Q}(D)| + 1) \cdot l_{t, k} \geq k + \delta && \forall t \in \widetilde{Q}(D), (\mathscr{c}_{G, k} = n) \in \constraints \\
        & s_t - (2 \cdot |\widetilde{Q}(D)| + 1) \cdot (1 - l_{t, k}) \leq k && \forall t \in \widetilde{Q}(D), (\mathscr{c}_{G, k} = n) \in \constraints \\
        & E_{G, k} \geq 0 && \forall (\mathscr{c}_{G, k} = n) \in \constraints \\
        & E_{G, k} \geq \sign(\mathscr{c}) \cdot \left (n - \sum_{t \in \sigma_G(\widetilde{Q}(D))} l_{t, k} \right ) && \forall (\mathscr{c}_{G, k} = n) \in \constraints \\
        & \frac{1}{|\constraints|} \sum_{(\mathscr{c}_{G, k} = n) \in \constraints} \frac{E_{G, k}}{n} \leq \varepsilon 
    \end{align*}
    \caption{Summary of our MILP model}
    \label{fig:milp}
\end{figure}

\begin{figure}[h]
    \centering
    \includegraphics[width=\linewidth]{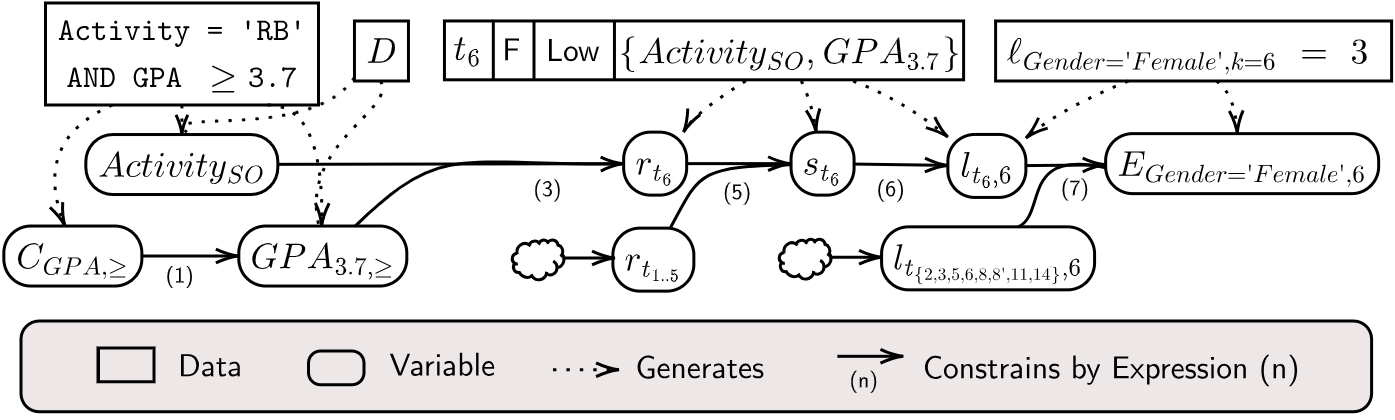}
    \caption{Diagram illustrating the expression generation for our running example. The predicate {\tt Activity = `RB' AND GPA $\geq$ 3.7} generates the variables $Activity_{SO}$ and $GPA_{3.7, \geq}$ as `SO' and $3.7$ are values that appear for those attributes respectively in the database $D$. $C_{GPA, \geq}$ is also generated by the predicate to hold the new constant of the predicate, and constrains the value of $GPA_{3.7, \geq}$ by (\ref{eq:value_bounds_inline}). The tuple $t_6\in \widetilde{Q}$ generates the variable $r_6$, whose value is constrained through (\ref{eq:tuple_in_ranking_inline}) by the values of $Activity_{SO}$ and $GPA_{3.7, \geq}$ due to its lineage. It also generates the variable $s_6$, which is then constrained by the value of the $r_t$ values for the tuples that rank better than it, i.e., $r_{t_{1..5}}$, through (\ref{eq:position_in_ranking_inline}). Finally, the constraint $\lb{Gender=`Female'}{k=6} = 3$ combines with $t_6$ to generate the variable $l_{t_{6},6}$ which is constrained by the value of $s_{t_6}$ by (\ref{eq:in_prefix_inline}). The constraint generates the variable $E_{Gender=`Female',6}$ which is constrained through (\ref{eq:tuples_to_satisfy_inline}) by the values of all the $l_{t, 6}$ variables for which $t$ is a part of the group (listed in \Cref{ex:deviation}).}
    \label{fig:example_summary}
    \vspace{-0.5cm}
\end{figure}

\paragraph*{Model limitation}
\label{sec:query_class_lims}
Given an input to our program, we construct a MILP program. The correctness of the solution generated by the program, as stated in \Cref{thm:bounded-deviation}, relies on three properties. First, every possible refinement may be represented as an assignment to the variables of the program. Second, the tuples in the output of any potential refinement are in the same relative order, and finally, every tuple in the output satisfies all the predicates of the corresponding refinement. We define the problem for SPJ queries, and thus, our model is designed to handle SPJ queries, and these properties hold for them. We note that supporting other classes of queries may require modifications to the problem definition as well as to our proposed model. For instance, in union queries, it is enough for a tuple in the output to satisfy the predicates of one branch of the union, in contrast to the third property. This may be handled straightforwardly, as noted in \Cref{sec:background}. Handling nested queries is more challenging since they may contain multiple selection statements at different nesting levels. The problem definition should first be extended to properly define how such a query can be refined, e.g., whether refinements at different nesting levels are allowed. Our proposed model cannot capture the refinement of selection statements in different nesting levels and, therefore, does not fulfill the first property. Moreover, if the {\tt ORDER BY} clause relates to an inner query, refining the inner query may change the relative order of the tuples in contrast to the second property.

\section{Optimizations}
\label{sec:optimizations}
In Section~\ref{sec:search}, we have presented a MILP formulation designed to solve the \problem{} problem. While this approach enables us to leverage existing MILP solvers, that can solve the problem efficiently, they often encounter difficulties when dealing with extensive programs (containing numerous expressions), and a large number of variables~\cite{QFix}. While the number of expressions and variables in the MILP we generate is linear in the data size, as we show in Section~\ref{sec:experiments}, MILP solvers struggle to scale and solve the generated programs.

To this end, we propose three optimizations for the construction of the MILP problem: one is a general optimization that applies in all cases, and the other two are limited in some cases. The first optimization is relevancy-based and removes from consideration tuples that are irrelevant to determining the satisfaction of the constraint set. The second optimization reduces the number of binary variables in the MILP problem by combining redundant variables. This optimization cannot be applied for queries with a {\tt DISTINCT} statement. %
The third optimization relaxes the expression used to determine the score of a tuple in the new ranking. This optimization is only applicable to tuples belonging to groups with only lower-bound or only upper-bound constraints, but not both.

\paragraph*{Relevancy-Based}
We propose a relevancy-based optimization to reduce the number of expressions and variables in our problem.
Recall that we use $k^*$ to denote the maximal $k$ that appears in the constraint set $\constraints$. Then, by removing tuples that could never appear in the top-$k^*$ in any refinement, we are able to avoid adding their variables and expressions to our problem. We determine the relevancy of these tuples by selecting the top-$k^*$ of the groups of tuples that share the same lineage. Let $[\prov(t)]$ be the equivalence class of tuples that share the same lineage as a tuple $t$. Then for a tuple $t$, let $T(t)$ be the ranking generated by ranking the tuples of $[\prov(t)]$ according to the {\tt ORDER BY} clause of $Q$. We see trivially that it is not possible for tuples past position $k^*$ in $T(t)$ for all $t$ in $\widetilde{Q}(D)$ to be included in the top-$k^*$ of any refinement. Thus, it is sufficient to consider only the top-$k^*$ of $T(t)$, denoted by $T(t)_{k^*}$, in the generated program, and we replace $\widetilde{Q}(D)$ in the expressions referencing it with $T(t)_{k^*}$ in Figure~\ref{fig:milp}.

\begin{example}\label{ex:optimization1}
    Consider $t_{14}$ from Table~\ref{tab:joined}. Its equivalence class $[\prov(t_{14})]$ is the set $\{t_7, t_{10}, t_{14}\}$. Assume we are interested in satisfying a single constraint $\lb{Gender=`Female'}{2} = 1$. Note that the tuple $t_{14}$ can never appear in the top-$2$ of any refinement query, as any refinement that includes $t_{14}$ includes tuples with its same lineage, i.e., $t_7$ and $t_{10}$. Therefore, it is safe to remove all variables and expressions related to $t_{14}$ from consideration.
\end{example}
This optimization is most effective when $k^*$ is small, and there are few lineage equivalence classes. In \Cref{sec:experiments}, we show that this is often the case in queries over real data sets. We further demonstrate the effect of $k^*$ on the running time in \Cref{fig:time_vs_k}.

\paragraph*{Selecting Lineages} 
Recall that the program we generate includes a binary variable $r_t$ for each tuple in $\widetilde{Q}(D)$. However, tuples sharing the same lineage all have equal values for their $r_t$ variables. Therefore, we can use a single variable for all tuples in the same lineage equivalence classes. 

\begin{example}
    To demonstrate this idea, consider the \running{} without its {\tt DISTINCT} statement and consider again the tuple $t_{14}$ with its equivalence class shown in Example~\ref{ex:optimization1}. If $t_{14}$ satisfies the selection condition of a refinement on the \running{}, then $t_7$ and $t_{10}$ must satisfy the conditions as well, as they share the same lineage. Therefore, we have the equivalence $r_{t_{14}} = r_{t_7} = r_{t_{10}}$. The variables $r_{t_7}$ and $r_{t_{10}}$ are then made redundant, as they always have the same value as $r_{t_{14}}$. 
\end{example}

In order to avoid such redundancy, instead of constructing the set of $r_t$ variables, we construct a set of variables $r_{[\prov(t)]}$ for every tuple $t$ in $\widetilde{Q}(D)$. Using (\ref{eq:tuple_in_ranking_inline}) as a basis, we are able to model $r_{[\prov(t)]}$ being assigned $1$ if and only if the tuples in $[\prov(t)]$ satisfy the selection condition of the corresponding refinement query.
 Instead of constructing expression (\ref{eq:tuple_in_ranking_inline}) for each tuple in $\widetilde{Q}(D)$, we construct the following expression for each $r_{[\prov(t)]}$ variable:
    $0 \leq \sum_{p \in \prov(t)} p - |\conds(Q)| \cdot r_{[\prov(t)]} \leq |\conds(Q)| - 1$.
Furthermore, in order to ensure that the $s_t$ values are modeled as before, we modify (\ref{eq:position_in_ranking_inline}) by changing $r_t$ to $r_{[\prov(t)]}$ and $r_{t'}$ to $r_{[\prov(t')]}$.
We note that this optimization cannot be applied if the input query includes a {\tt DISTINCT} statement,
as we need this information in order to not select tuples that already have a tuple sharing its distinct value selected.

\begin{table*}[t!]
    \centering
    \footnotesize  
    \begin{tabular}{lclcl}
    \hline
    \textbf{Dataset} & \textbf{Query} & \multicolumn{1}{c}{\textbf{Predicates}} & \textbf{Order by (DESC)} & \multicolumn{1}{c}{\textbf{Constraints}}\\ \hline\hline
    Astronauts & $Q_A$ & 
    \begin{tabular}{l@{}}\texttt{"Graduate Major" = 'Physics'}\\ \texttt{AND "Space Walks" <= 3}\\ \texttt{AND "Space Walks" >= 1} \end{tabular}  
    &{\tt "Space Flight (hrs)"}&\begin{tabular}{@{}ll@{}}
        (1) $\lb{Gender='F'}{k} = \frac{k}{2}$ & (4) $\lb{Status='Management'}{k} = \frac{k}{5}$ \\ (2) $\lb{Gender='M'}{k} = \frac{k}{2}$ & (5) $\lb{Status='Retired'}{k} = \frac{k}{5}$ \\ (3) $\lb{Status='Active'}{k} = \frac{k}{5}$ &
    \end{tabular}\\
    \hline
    Law Students & $Q_L$ & \begin{tabular}{l@{}}\texttt{Region = 'GL'}\\\texttt{AND GPA <= 4.0}\\\texttt{AND GPA >= 3.5}\end{tabular}& {\tt LSAT}&\begin{tabular}{@{}ll@{}}
        (1) $\lb{Sex='F'}{k} = \frac{k}{2}$ & (4) $\lb{Race='White'}{k} = \frac{k}{5}$ \\ (2) $\lb{Sex='M'}{k} = \frac{k}{2}$ & (5) $\lb{Race='Asian'}{k} = \frac{k}{5}$ \\ (3) $\lb{Race='Black'}{k} = \frac{k}{5}$ &
    \end{tabular}\\
   \hline
    MEPS & $Q_M$ & \begin{tabular}{l@{}}\texttt{Age > 22}\\\texttt{AND "Family Size" >= 4}\end{tabular}& {\tt Utilization}&\begin{tabular}{@{}ll@{}}
        (1) $\lb{Sex='F'}{k} = \frac{k}{2}$ & (4) $\lb{Race='Black'}{k} = \frac{k}{5}$ \\ (2) $\lb{Sex='M'}{k} = \frac{k}{2}$ & (5) $\lb{Race='White'}{k} = \frac{k}{5}$ \\ (3) $\lb{Race='Asian'}{k} = \frac{k}{5}$ &
    \end{tabular}\\ \hline
    TPC-H & $Q_5$ & \texttt{Region = 'ASIA'} & {\tt Revenue}& \begin{tabular}{@{}ll@{}}
        (1) $\lb{OrderPrio='5-LOW'}{k} = \frac{k}{2}$ & (4) $\lb{MktSeg='BUILDING'}{k} = \frac{k}{5}$ \\ (2) $\lb{OrderPrio='3-MEDIUM'}{k} = \frac{k}{5}$ & (5) $\lb{MktSeg='MACHINERY'}{k} = \frac{k}{5}$ \\ (3) $\lb{MktSeg='AUTOMOBILE'}{k} = \frac{k}{5}$ &
    \end{tabular}\\ \hline
    \end{tabular}
    \caption{Queries and constraints}
    \label{tab:queriesAndConstraints}
    \vspace{-0.5cm}
\end{table*}

\paragraph*{Relaxation for Single-Constraint-Type Tuples}
We present another optimization that is possible when a tuple belongs to groups that have only either lower-bound ($\ell$) or upper-bound ($\mathscr{u}$) cardinality constraints made on them. We define the set of tuples belonging only to groups with lower-bound constraints as 
$L = \{ t \mid \not\exists (\ub{G}{k} = n) \in \constraints, t \in \widetilde{Q}(D) \cap G \}$.
We define a similar set $U$ for upper-bound tuples, replacing $\ub{G}{k} = n$ in the quantifier with $\lb{G}{k} = n$. Then, for a tuple $t \in L$, we relax expression (\ref{eq:position_in_ranking_inline}) to 
$1 + |\widetilde{Q}(D)| \cdot (1 - r_t) + \sum_{t' \in \widetilde{Q}(D), \widetilde{Q}(D)(t') < \widetilde{Q}(D)(t)} r_{t'} \leq s_{t}$.

For tuples in $U$ we use an upper bound instead ($\geq s_t$). This relaxation makes finding feasible solutions for this model easier and may be used by presolving techniques in MILP solvers.
To understand why this maintains the correctness of our solution, consider the lower-bound constraints. Intuitively, we can allow the $s_t$ variables of tuples belonging to the group defined in the constraint in a given top-$k$ to be assigned a value larger than the position of $t$ in the ranking, as this can only result in a worse deviation from satisfying the constraint (but can not assign a value smaller than the position of $t$ in the ranking). The case for upper-bound constraints is symmetric.

\section{Experiments}
\label{sec:experiments}

We performed an experimental analysis of our proposed algorithm on real-life and synthetic datasets considering realistic scenarios. We first examine the effect of different parameters on the running time. 
We show that our solution scales, performs well on realistic scenarios and that the optimization presented in Section~\ref{sec:optimizations} are effective. We then compare our solution to~\cite{MLJ22,ERICA} that studies a similar problem for queries without ranking. We demonstrate the differences between solutions and compare their outputs and performance through a use case.

\subsection{Evaluation Benchmark}\label{sec:benchmark}
To the best of our knowledge, we are the first to consider this problem, and there is no benchmark consisting of datasets, including ranking queries and sets of cardinality constraints. To this end, we have developed a dedicated benchmark that involves real-life datasets used in the context of ranking as follows. 
\begin{itemize}[leftmargin=1em,labelwidth=*,align=left]
    \item {\bf Astronauts\footnote{\url{https://www.kaggle.com/datasets/nasa/astronaut-yearbook}}}: A dataset of 19 attributes containing 357 NASA astronauts and information about their careers. Astronauts are ranked in descending order by their number of space flight hours, as was done in \cite{SYJ18}.
    \item {\bf Law Students \cite{LawDataOriginal,LawData}}: A dataset of 8 attributes containing 21,790 law students and various evaluations such as grade point average, LSAT examination scores, and first year grade average. Students are ranked by their LSAT scores, as in \cite{ZHW20}.
    \item {\bf MEPS\footnote{\url{https://meps.ahrq.gov/data\_stats/download\_data/pufs/h192/h192doc.shtml}}}: A dataset of 1,941 attributes containing 34,655 individuals and information related to their usage of healthcare. Patients are ranked in descending order by a combination of utilization metrics (office-based visits + ER visits + in-patient nights + home health visits), as was done in \cite{YGS19}. 
\end{itemize}

To evaluate scalability, we use Synthetic Data Vault (SDV) \cite{SDV} to learn the distributions of our real-life datasets and subsequently synthesize scaled-up versions.
We also use the {\bf TPC-H Benchmark}%
, which includes complex queries involving multiple tables. We generate a TPC-H dataset of scale factor $1$, which is approximately $1$ GB of data. We use Query 5 (Q5) from the TPC-H specification and remove the predicates filtering on date types.

\subsubsection*{Queries and constraints} %
\Cref{tab:queriesAndConstraints} summarizes the queries and constraints used.
 We generated queries and constraints for each dataset, showcasing real-life scenarios. Each row in the table represents a query. %
 For example, the first line represents the following query 
 $Q_A$ over the Astronauts dataset. 
 \begin{center}
     \footnotesize
     \begin{tabular}{l}
        \verb"SELECT * FROM Astronauts" \\
        \verb|WHERE "Space Walks" <= 3 AND "Space Walks" >= 1| \\
        \verb|AND "Graduate Major" = 'Physics'|\\
        \verb|ORDER BY "Space Flight (hrs)" DESC|\\
     \end{tabular}
     \end{center}
     This query may be used in the selection process of astronauts for a mission. The mission requires specific training (number of space walks) and background (graduate major), and the candidates are ordered by their experience (space flight hours).  
     Similarly, the query $Q_L$ for the Law Students dataset may be used to rank outstanding students (based on their GPA) from a particular region based on their SAT scores for a scholarship. Finally, $Q_M$ is defined for the MEPS dataset.  Such a query may be used to invite the best-fitting patients (based on their utilization) with specific criteria, for a study.

     We defined result diversity constraints for each dataset (listed in Table~\ref{tab:queriesAndConstraints}). For instance, in the Astronauts dataset, the result should include women and candidates of varying ranks in the organizational hierarchy. The constraints' bounds are parameterized with a value $k$, and we set them to values that produce a valid refinement in most cases. Specifically, out of 132 performed experiments, we were not able to find a solution in only 2.

\subsubsection*{Parameters setting} When using ranked-retrieval in decision-making contexts (e.g., when deciding how many people to invite for in-person interviews), one expects the number of items a user will consider ($k$) to be relatively low. In general, rankings are subject to position bias --- a geometric drop in visibility of items in lower ranks --- and so are best-suited for cases where the user interacts with a small number of top-ranked items~\cite{DBLP:journals/cacm/Baeza-Yates18}.
Thus, unless otherwise specified, we use $k=10$ as a default value. Furthermore, we let the default maximum deviation $\varepsilon$ be $0.5$, aiming to strike a balance between being sufficiently close to the constraints but realistically possible in the datasets. In practice, this parameter may be chosen by specifying a worst-case scenario that is still acceptable, and then use the deviation of this scenario as calculated by \Cref{def:mospe} to set $\varepsilon$. We also set the constraints set to include a single constraint (constraint (1) from Table~\ref{tab:queriesAndConstraints} for each dataset). We used the three distance measures mentioned in \Cref{sec:distance}: the queries predicates distance measure $DIS_{pred}$ (abbr. QD in the figures), the Jaccard distance over the output, $DIS_{Jaccard}$ (JAC in the figures), and Kendall's $\tau$, $DIS_{Kendall}$, for top-$k$ lists defined in~\cite{FKS03} (KEN in the figures).

\subsubsection*{Compared algorithms}
To our knowledge, our problem is novel and has no competing algorithms other than the na\"{i}ve exhaustive search. Therefore, we compare our baseline MILP-based algorithm (MILP), our optimized MILP-based algorithm (MILP+opt), which includes the optimization described in Section~\ref{sec:optimizations}, an exhaustive search over the space of refinements (Na\"{i}ve), and a version the exhaustive search that uses our provenance annotations to evaluate the refinements (Na\"{i}ve+prov). 
We report the total running time and show the setup time (constructing the MILP for MILP-based solutions, and generating the provenance for Na\"{i}ve+prov). The MILP solver time is the gap between total and setup.
The reported times are an average of $5$ executions.

\subsubsection*{Platform \& implementation details}
Our experiments were performed on macOS 13.4 with an Apple M2 processor and 16 GB of memory. Our algorithm was implemented with IBM's CPLEX 22.1.1.0\footnote{\url{https://www.ibm.com/products/ilog-cplex-optimization-studio/cplex-optimizer}} to solve the mixed-integer linear program and DuckDB 0.8.0 \cite{DuckDB} for query evaluation. The algorithm to construct the problem and the na\"{i}ve method were written and evaluated with Python 3.9.6 and PuLP 2.7.0 (the library used for modeling the MILP problem). $DIS_{pred}$ is linearized by computing the Jaccard distance for categorical predicates through the Charnes-Cooper transformation~\cite{CC62}. In addition, for numerical predicates, additional variables are generated that represent the absolute difference between the refined and original constants. As it does not consider the output, we skip generating $s_t$ and $l_{t, k}$ variables for tuples that do not belong to any group $G$ in $\constraints$. $DIS_{Jaccard}$ is evaluated over the output, thus we leverage the fact that there are at least $k^*$ tuples in the output and aim at maximizing the number of original tuples output, thereby maximizing the Jaccard distance.
For $DIS_{Kendall}$, only Cases 2 (a tuple leaves the top-$k$) and 3 (a tuple enters the top-$k$) as defined in \cite{FKS03} may occur in our model. We create a variable for each case for each tuple, which is then equal to the sum of the case if the tuple is selected and zero otherwise. Specifically, we create variables for tuples that were present in the original top-$k^*$, as these cases pertain to pairs where at least one is present in the original output. If we have such a tuple $t$, then since we know that the order of the output is determined ahead of time by the ranking in $\widetilde{Q}$, we can sum the $l_{t', k^*}$ values for all $t'$ that match the condition of the cases. We include an upper-bound expression limiting the value to at most $0$ if the case is irrelevant (determined by $l_{t, k^*}$), and the maximum possible value otherwise. With this in mind, we model Case 2 with the expressions for each $t \in Q(D)_{k^*}$
{\allowdisplaybreaks
\begin{align*}
    CaseII_t &\leq (|\widetilde{Q}(D)| + 1) \cdot (1 - l_{t, k^*}) \\
    CaseII_t &\leq (|\widetilde{Q}(D)| + 1) \cdot l_{t, k^*} + \sum_{\mathclap{\substack{t' \in Q(D)_{k^*}\\ Q(D)(t') > Q(D)(t)}}} l_{t', k^*} \\
    CaseII_t &\geq \sum_{\mathclap{\substack{t' \in Q(D)_{k^*}\\ Q(D)(t') > Q(D)(t)}}} l_{t', k^*} - (|\widetilde{Q}(D)| + 1) \cdot l_{t, k^*}
\end{align*}
}
Similarly, we model Case 3 for each $t \ in Q(D)_{k^*}$ by the expressions
\begin{align*}
    CaseIII_t &\leq (|\widetilde{Q}(D)| + 1) \cdot (1 - l_{t, k^*}) \\
    CaseIII_t &\leq (|\widetilde{Q}(D)| + 1) \cdot l_{t, k^*} + \sum_{\mathclap{\substack{t' \notin Q(D)_{k^*}}}} l_{t', k^*} \\
    CaseIII_t &\geq \sum_{\mathclap{\substack{t' \notin Q(D)_{k^*}}}} l_{t', k^*} - (|\widetilde{Q}(D)| + 1) \cdot l_{t, k^*}
\end{align*}
Note that both $CaseII_t$ and $CaseIII_t$ have lower bounds of $0$. Finally, by minimizing the sum of these variables for each $t$ in the original output, we minimize the Kendall's $\tau$ distance for top-$k$ (over $k^*$).

\begin{figure}[t!]
    \begin{subfigure}{.23\textwidth}
      \centering
      \includegraphics[width=3.7cm]{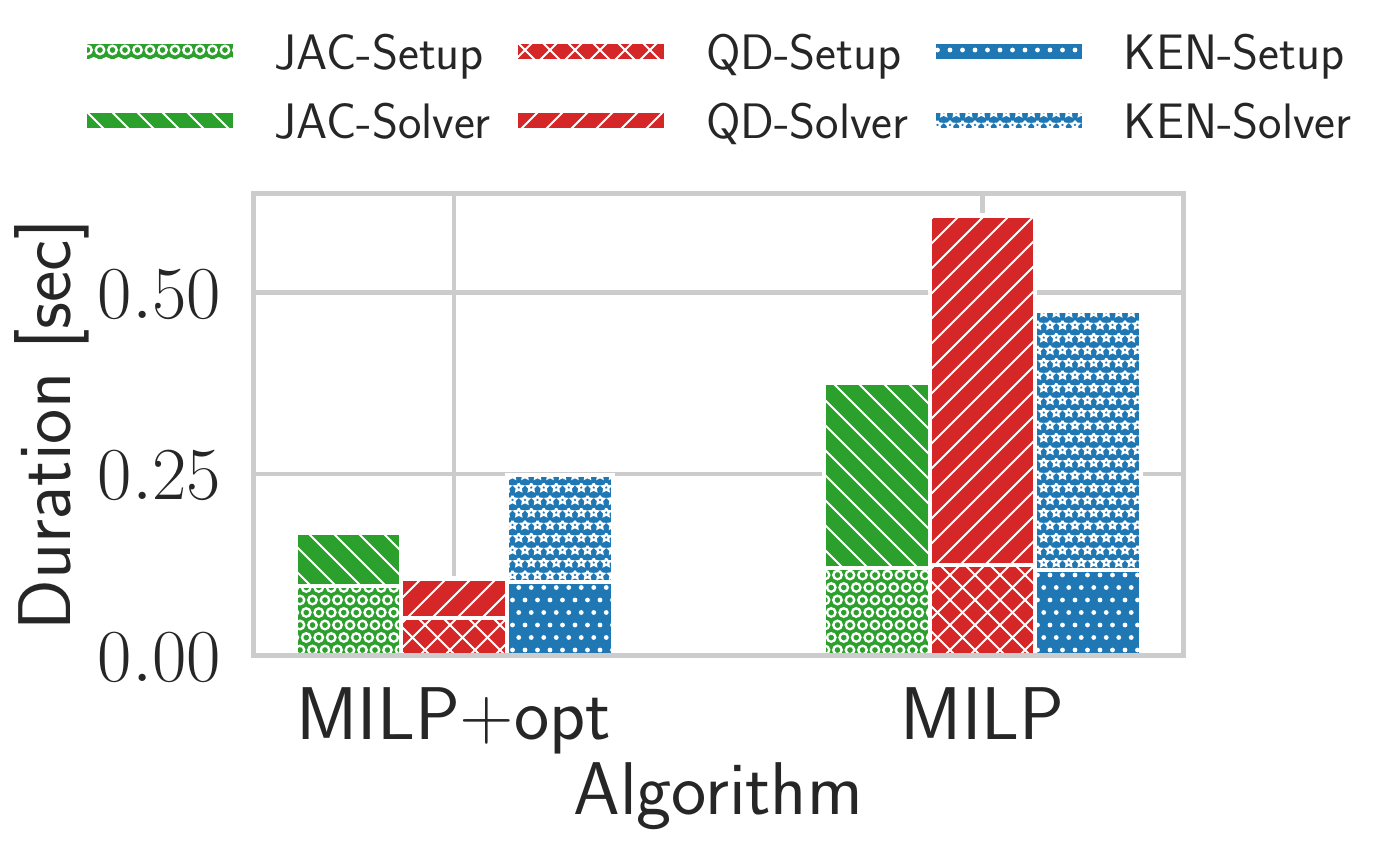}
      \caption{Astronauts}
      \label{fig:r1}
    \end{subfigure}%
    \begin{subfigure}{.23\textwidth}
      \centering
      \includegraphics[width=3.3cm]{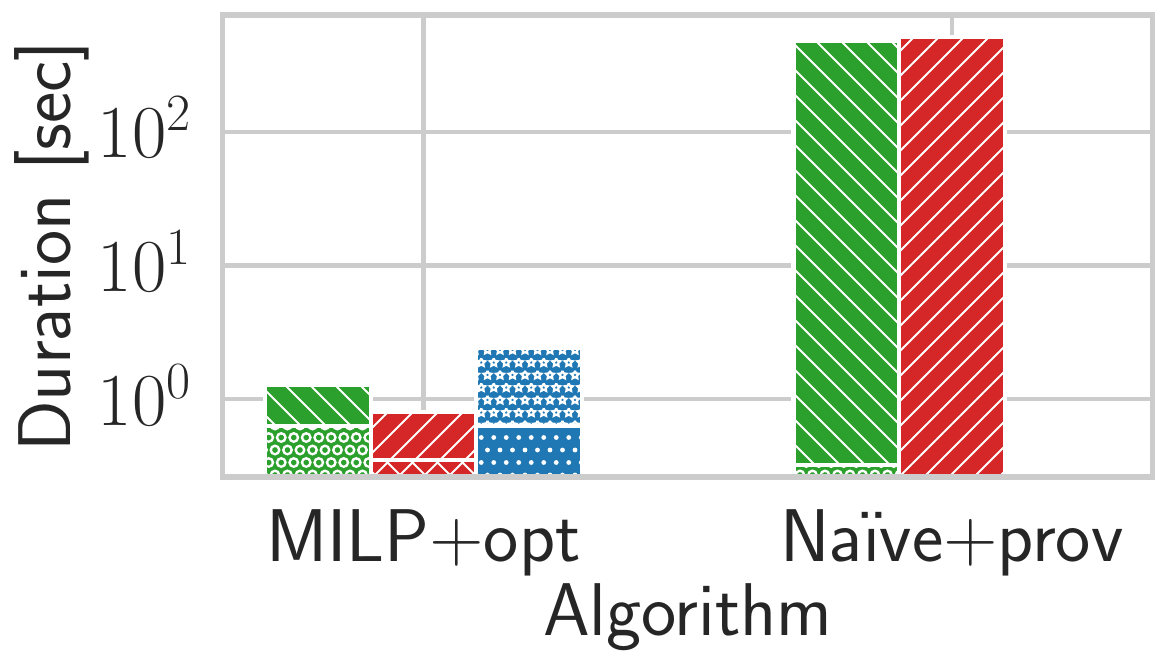}
      \caption{Law Students}
      \label{fig:law_method}
    \end{subfigure}
    \begin{subfigure}{.23\textwidth}
      \centering
      \includegraphics[width=3.4cm]{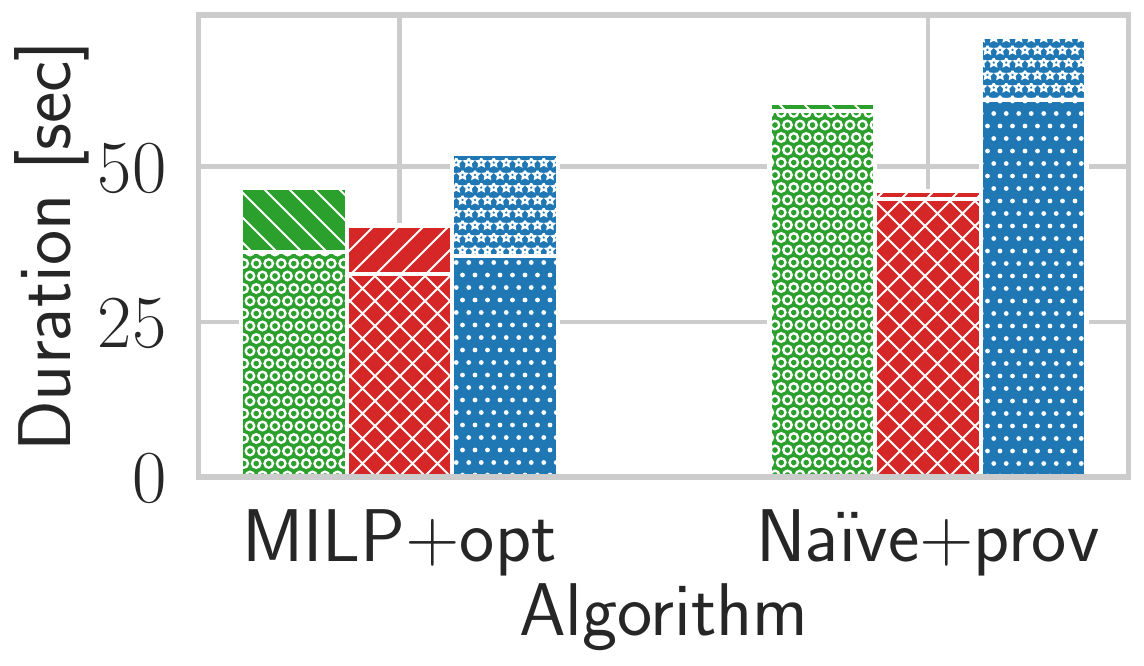}
      \caption{MEPS}
      \label{fig:meps_method}
    \end{subfigure}%
    \begin{subfigure}{.23\textwidth}
      \centering
      \includegraphics[width=3.4cm]{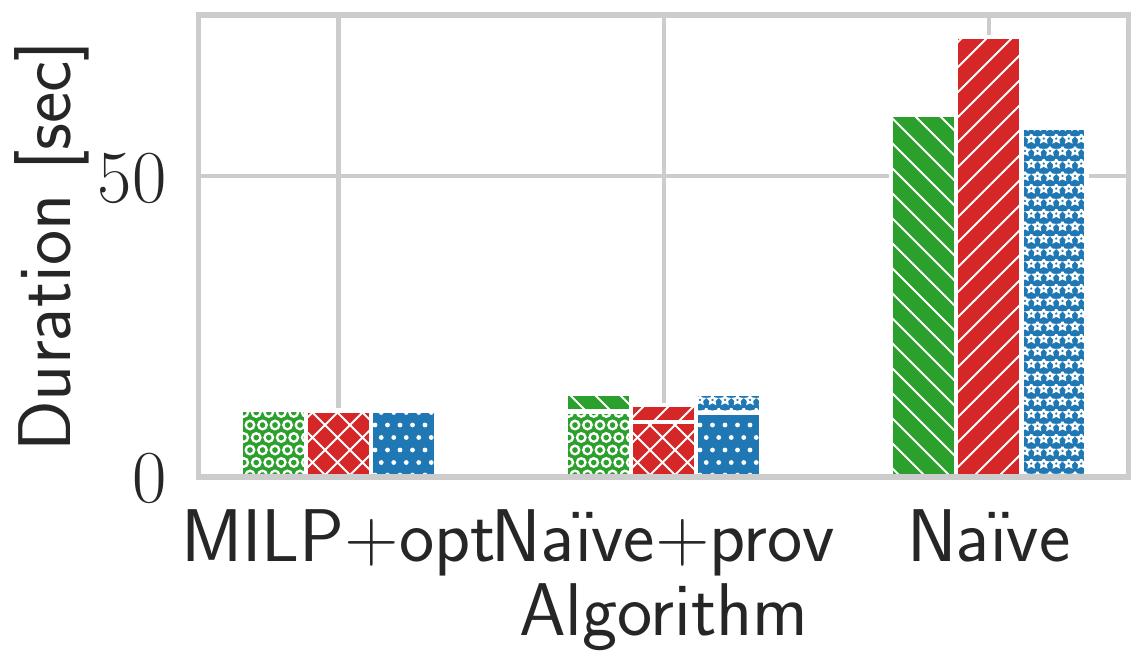}
      \caption{TPC-H}
      \label{fig:r4}
    \end{subfigure}
    \vspace{-0.3cm}
    \caption{Running time of compared algorithms, for cases where computation completed within a 1-hour timeout (method or distance omitted when timed out).  MILP+opt consistently outperforms other methods.}
    \vspace{-0.3cm}
    \label{fig:time_vs_method}
\end{figure}
\begin{figure*}[ht]
    \begin{subfigure}{.23\textwidth}
      \centering
      \includegraphics[width=4cm]{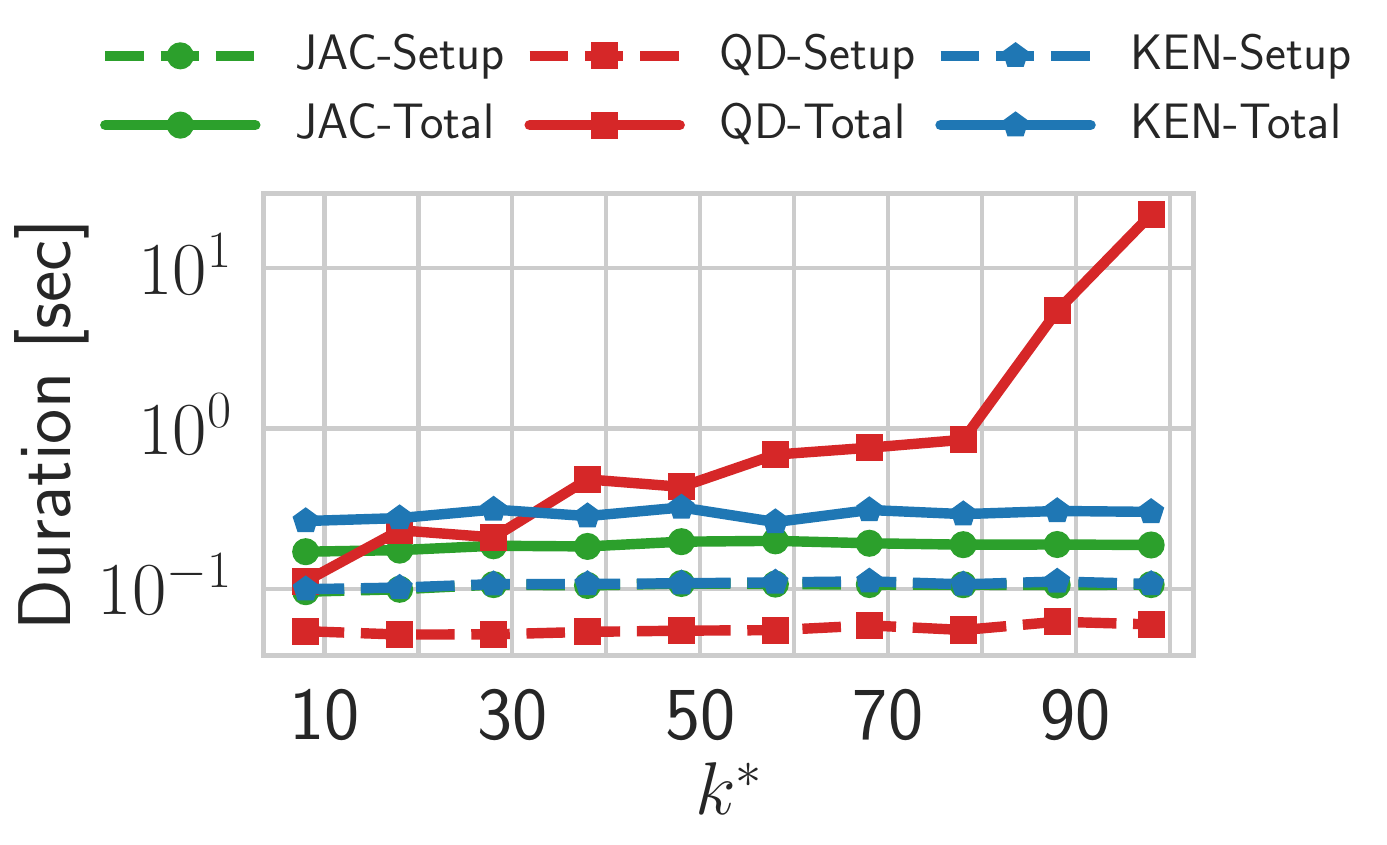}
      \caption{Astronauts (log scale)}
      \label{fig:r5}
    \end{subfigure}
    \begin{subfigure}{.23\textwidth}
      \centering
      \includegraphics[width=3.8cm]{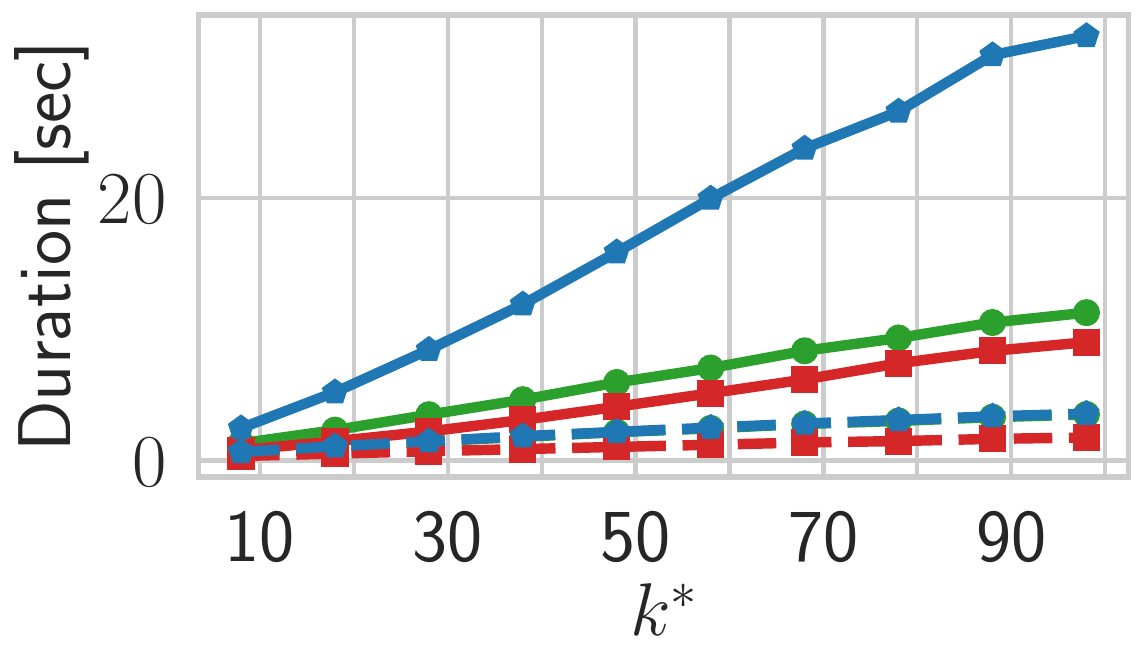}
      \caption{Law Students}
      \label{fig:r6}
    \end{subfigure}
    \begin{subfigure}{.23\textwidth}
      \centering
      \includegraphics[width=3.9cm]{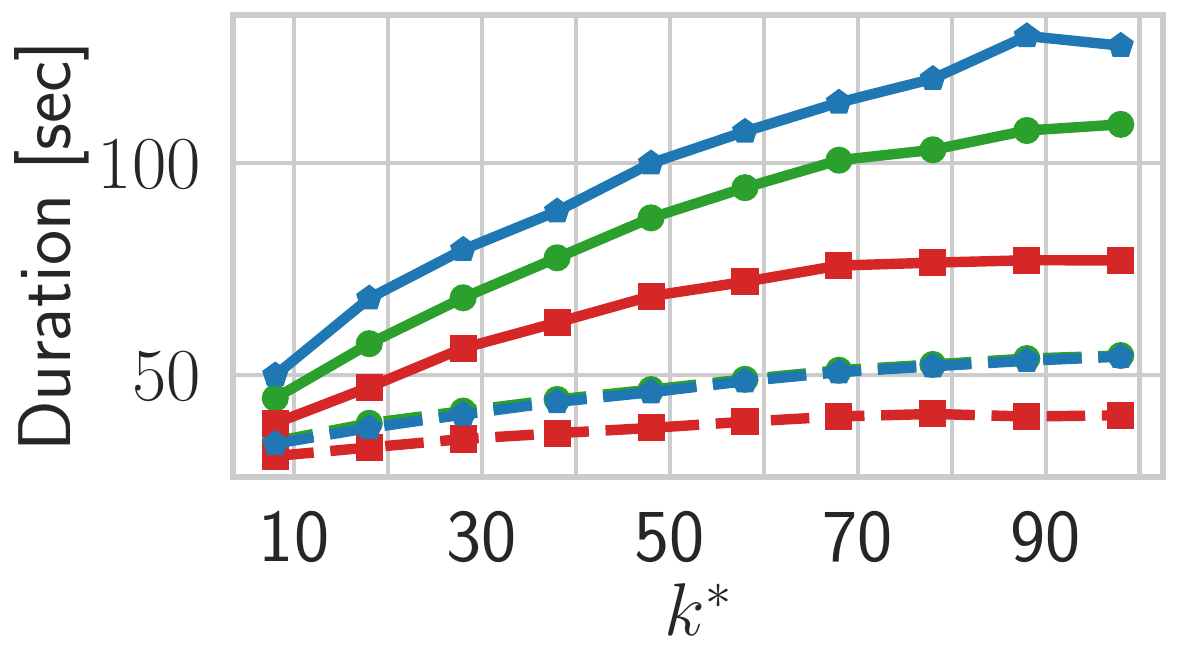}
      \caption{MEPS}
      \label{fig:r7}
    \end{subfigure}
    \begin{subfigure}{.23\textwidth}
      \centering
      \includegraphics[width=4cm]{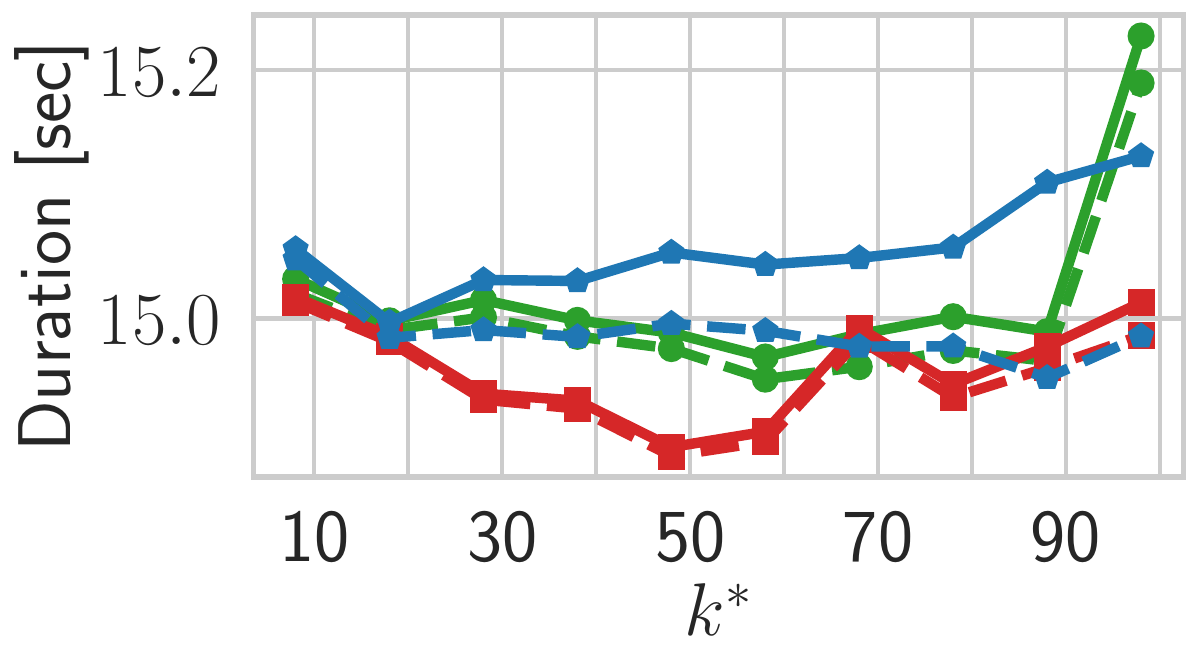}
      \caption{TPC-H}
      \label{fig:r8}
    \end{subfigure}

    \caption{Running time vs. $k^*$, showing $DIS_{pred}$ is often the fastest to compute, while $DIS_{Kendall}$ can be sensitive to increasing $k^*$.}
    \label{fig:time_vs_k}
\end{figure*}
\begin{figure*}[ht]
    \begin{subfigure}{.23\textwidth}
      \centering
      \includegraphics[width=4cm]{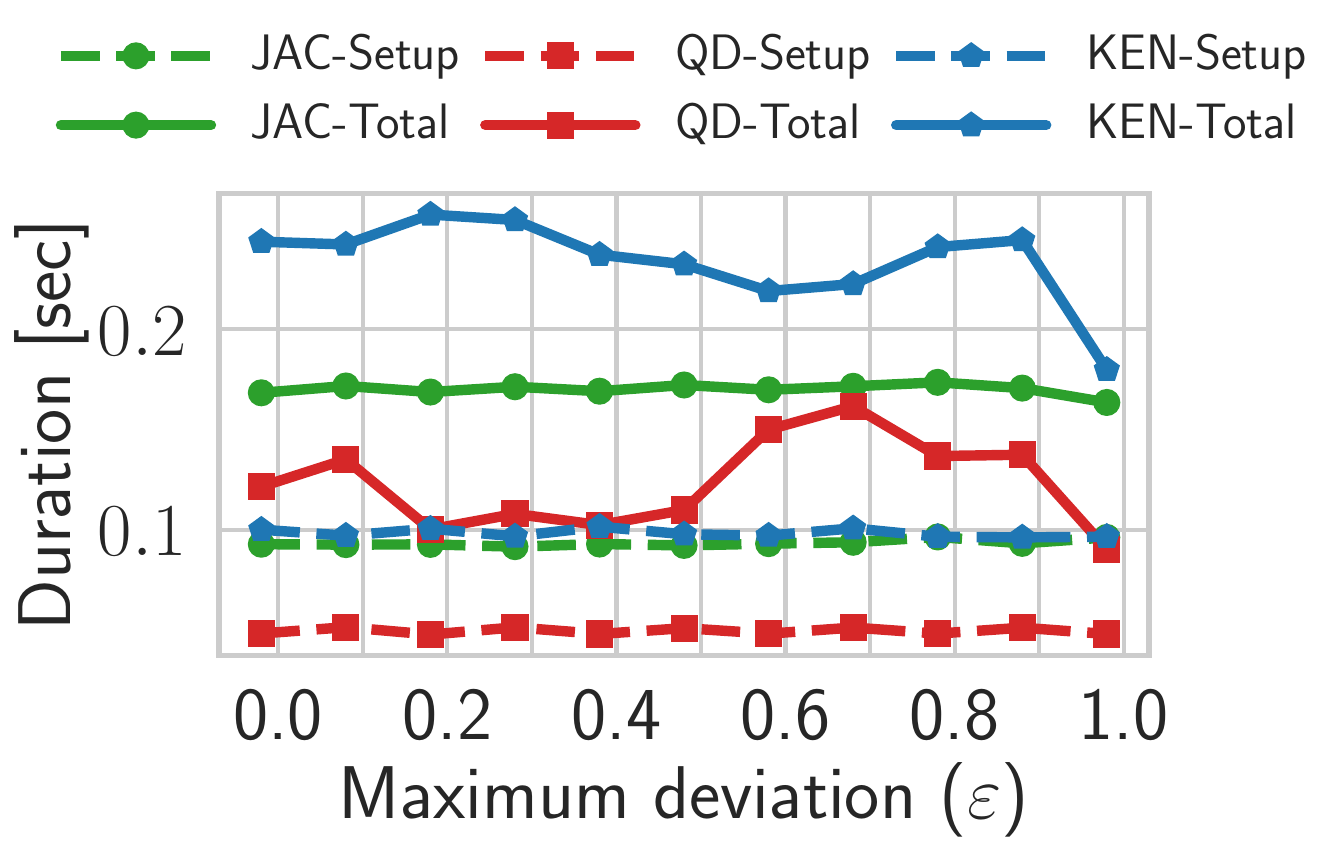}
      \caption{Astronauts}
      \label{fig:r9}
    \end{subfigure}
    \begin{subfigure}{.23\textwidth}
      \centering
      \includegraphics[width=3.7cm]{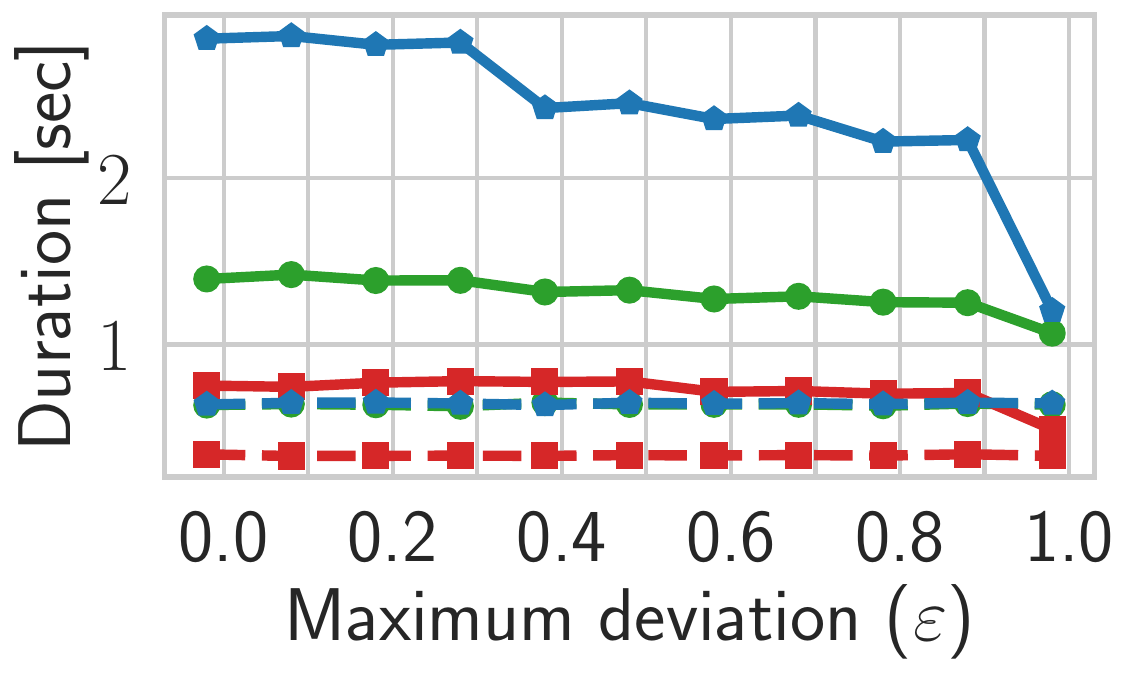}
      \caption{Law Students}
      \label{fig:r10}
    \end{subfigure}
    \begin{subfigure}{.23\textwidth}
      \centering
      \includegraphics[width=3.7cm]{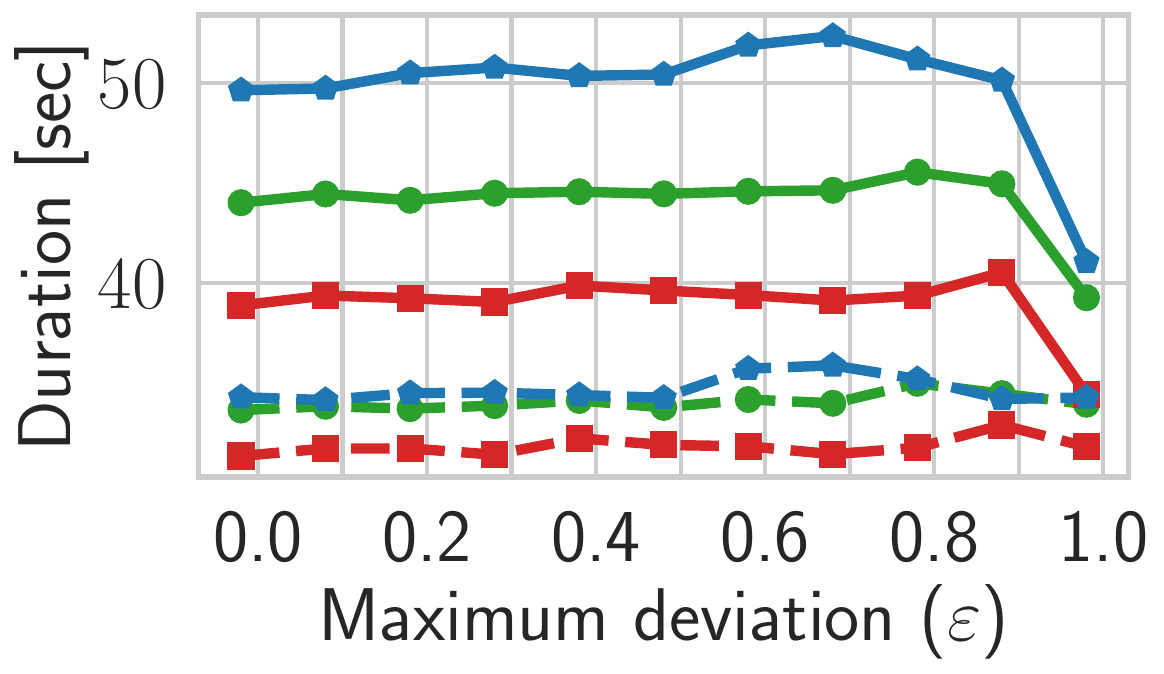}
      \caption{MEPS}
      \label{fig:r11}
    \end{subfigure}
    \begin{subfigure}{.23\textwidth}
      \centering
      \includegraphics[width=4cm]{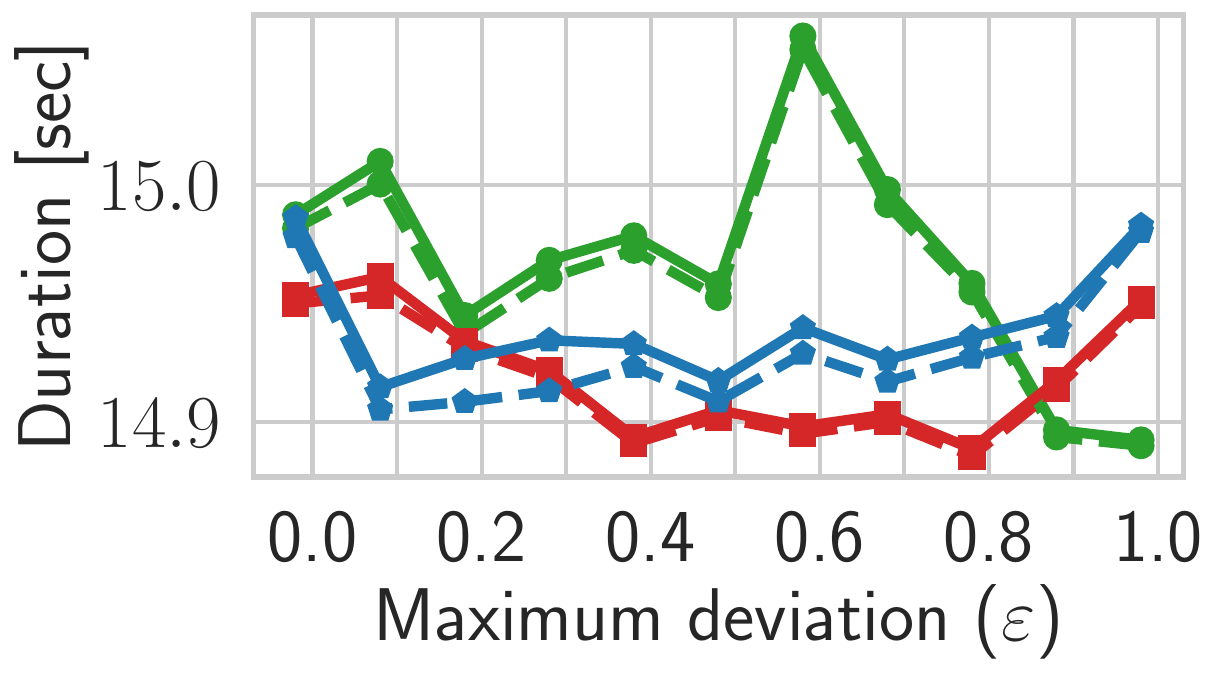}
      \caption{TPC-H}
      \label{fig:r12}
    \end{subfigure}
    
    \caption{Running time vs. maximum deviation ($\varepsilon$), showing that the effect of $\varepsilon$ is limited.}
    \label{fig:time_vs_eps}
\end{figure*}

\begin{figure*}[ht]
    \begin{subfigure}{.23\textwidth}
      \centering
      \includegraphics[width=4cm]{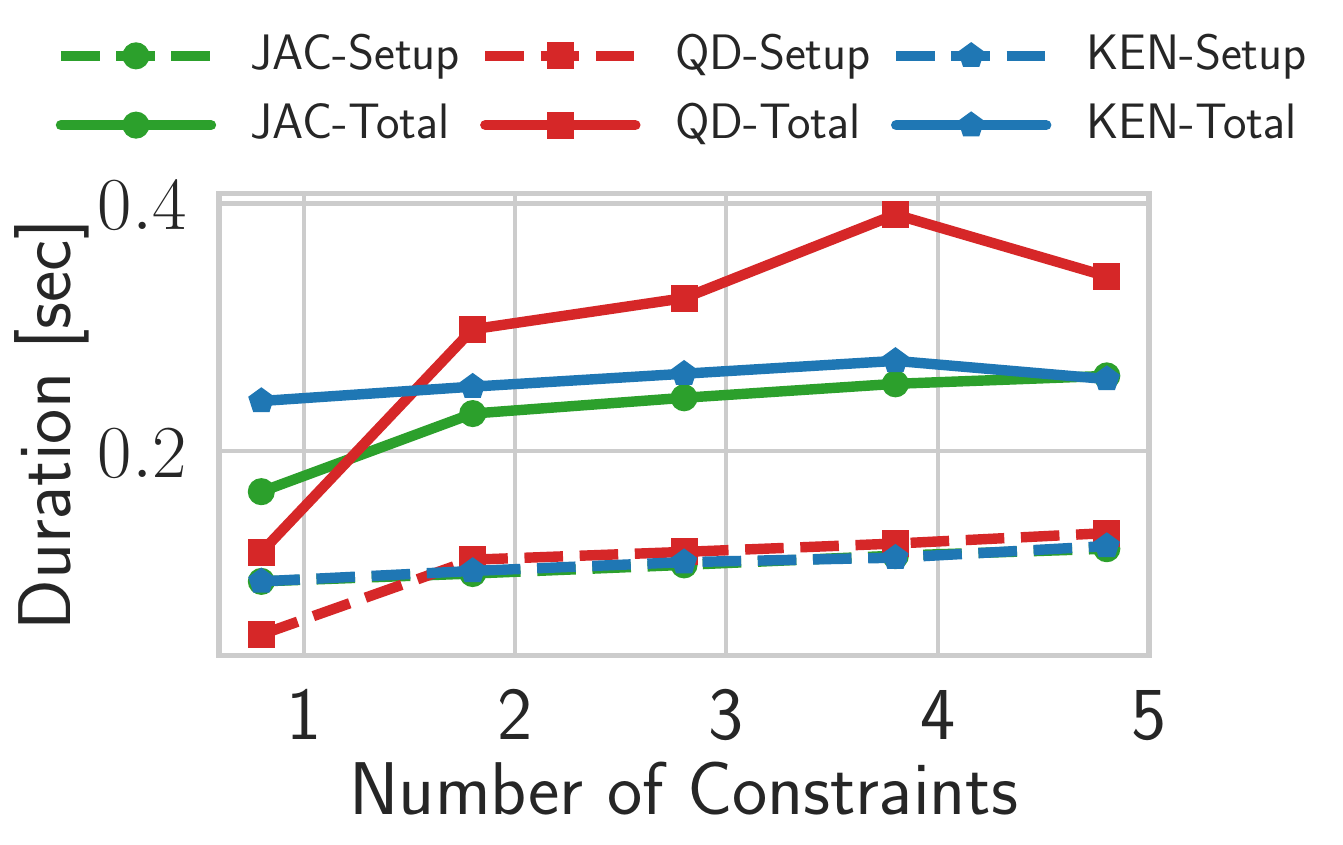}
      \caption{Astronauts}
      \label{fig:r13}
    \end{subfigure}
    \begin{subfigure}{.23\textwidth}
      \centering
      \includegraphics[width=3.7cm]{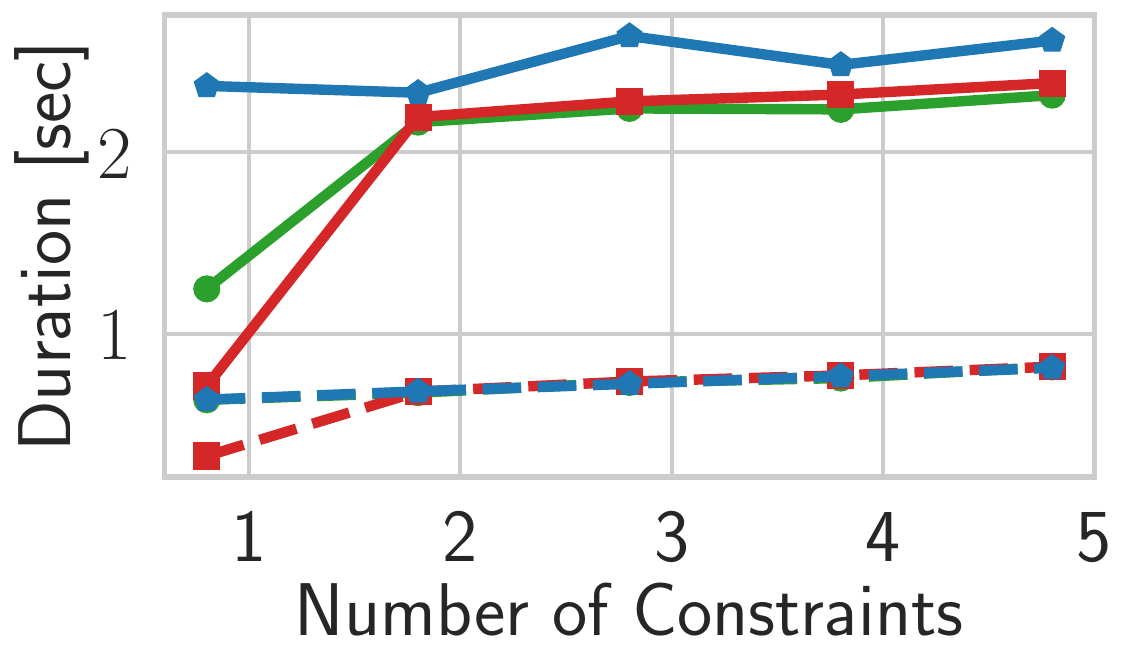}
      \caption{Law Students}
      \label{fig:r14}
    \end{subfigure}
    \begin{subfigure}{.23\textwidth}
      \centering
      \includegraphics[width=3.85cm]{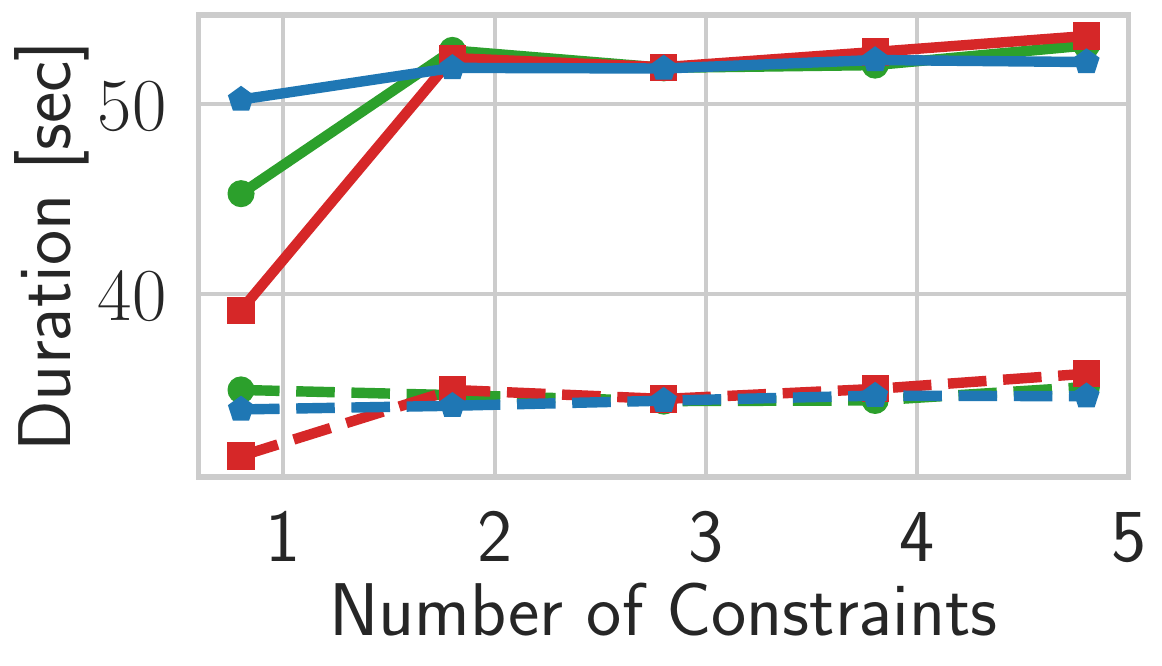}
      \caption{MEPS}
      \label{fig:r15}
    \end{subfigure}
    \begin{subfigure}{.23\textwidth}
      \centering
      \includegraphics[width=4.1cm]{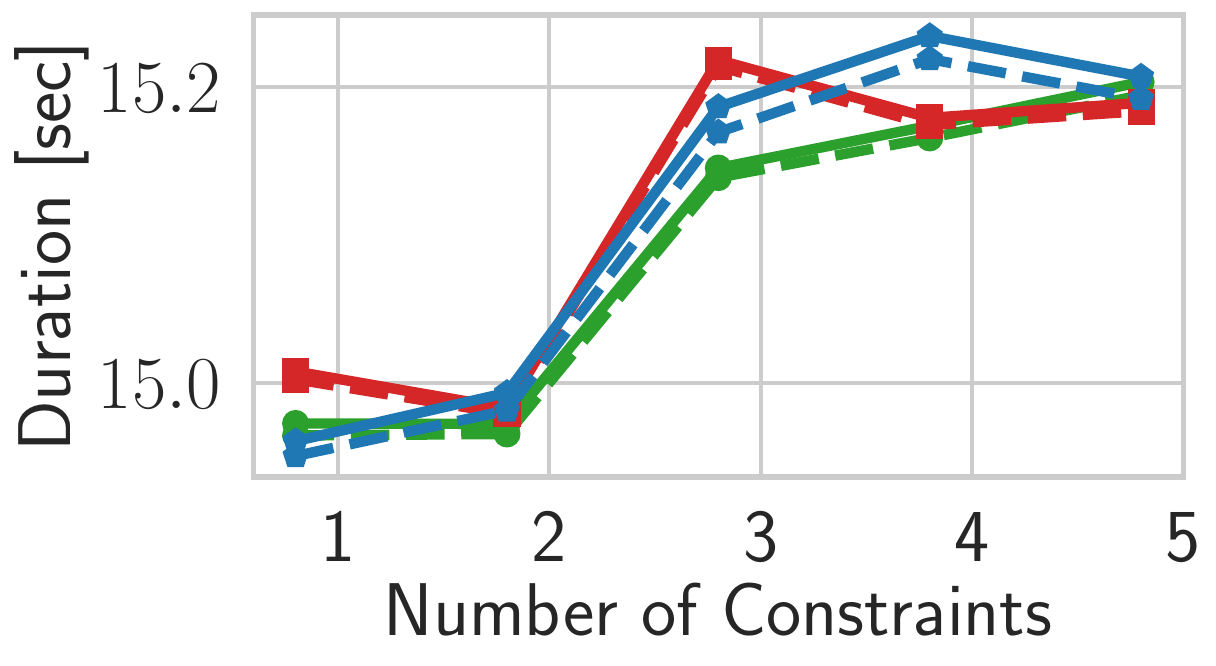}
      \caption{TPC-H}
      \label{fig:r16}
    \end{subfigure}
    
    \caption{Running time vs. the number of constraints: the impact of the number of constraints is limited.}
    \label{fig:time_vs_num_of_constraints}
\end{figure*}

\begin{figure*}[ht]
    \begin{subfigure}{.23\textwidth}
      \centering
      \includegraphics[width=4cm]{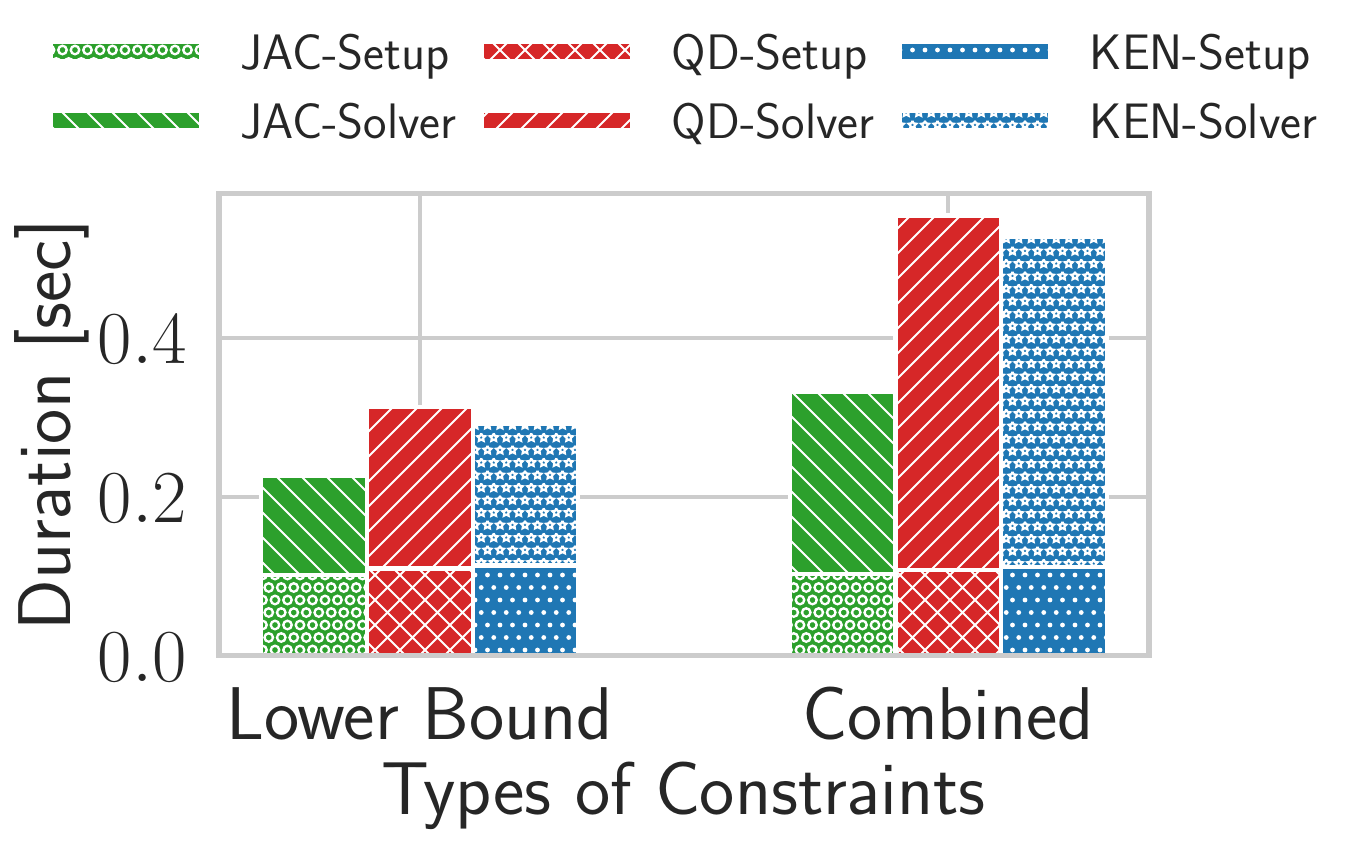}
      \caption{Astronauts}
      \label{fig:r17}
    \end{subfigure}
    \begin{subfigure}{.23\textwidth}
      \centering
      \includegraphics[width=3.4cm]{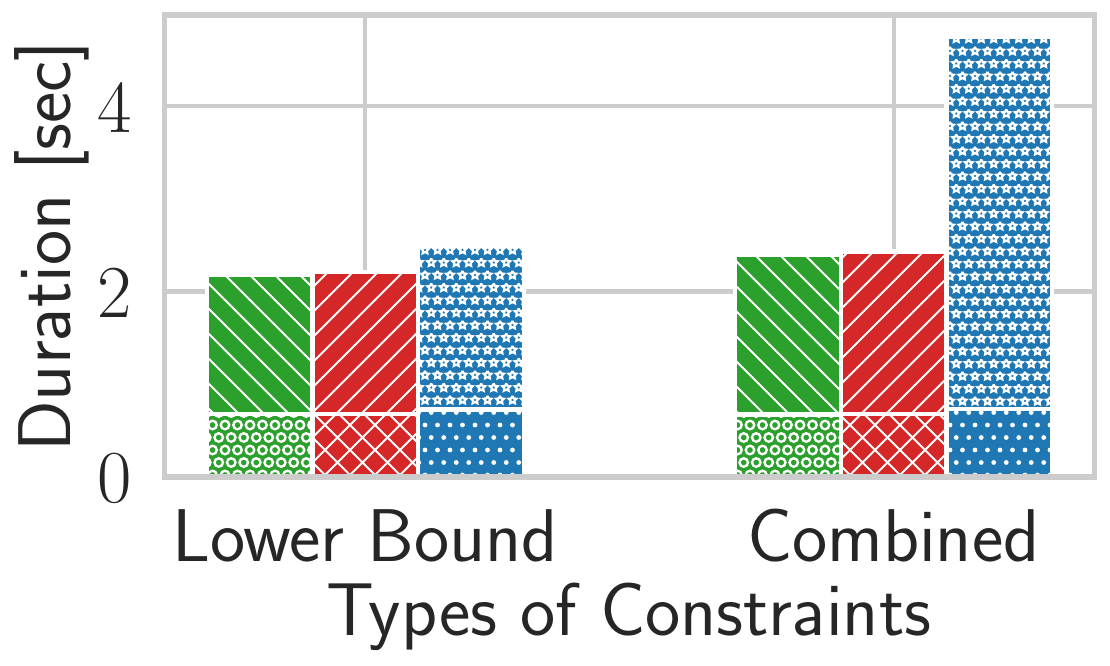}
      \caption{Law Students}
      \label{fig:r18}
    \end{subfigure}
    \begin{subfigure}{.23\textwidth}
      \centering
      \includegraphics[width=3.7cm]{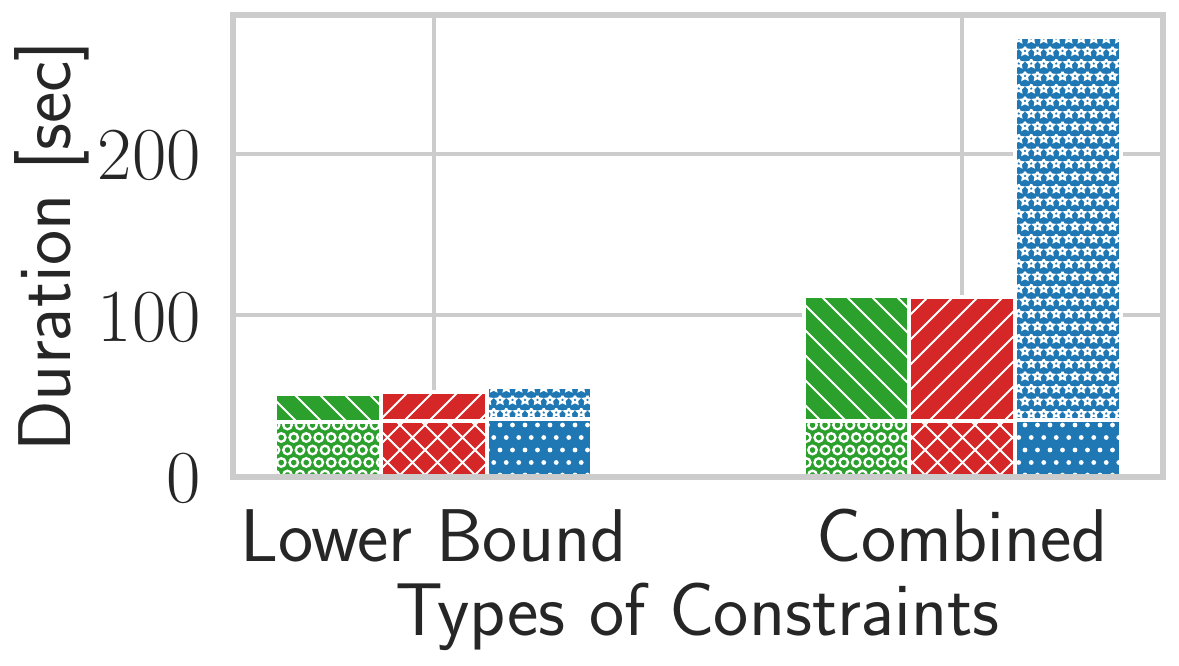}
      \caption{MEPS}
      \label{fig:r19}
    \end{subfigure}
    \begin{subfigure}{.23\textwidth}
      \centering
      \includegraphics[width=3.4cm]{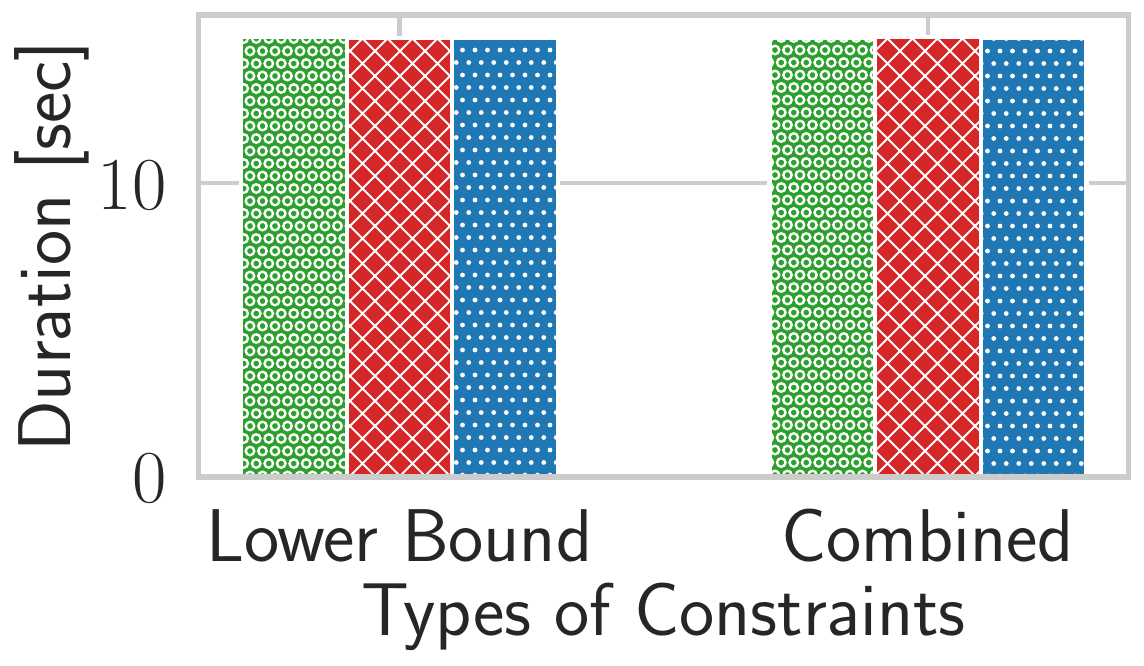}
      \caption{TPC-H}
      \label{fig:r20}
    \end{subfigure}
    
    \caption{Running time vs. constraint type, showing the efficacy of one of our optimizations.}
    \label{fig:time_vs_constraint_type}
    \vspace{-0.5cm}
\end{figure*}

\subsection{Results}\label{sec:results}

\paragraph*{Running time for compared algorithms}
We begin by comparing the running time of all algorithms using the default parameters and setting a timeout of one hour. 
Recall that the size of the generated MILP program (without optimization) is linear in the data size and that MILP solvers are typically sensitive to the program size. Thus, we expect the MILP algorithm to struggle with large-scale datasets. On the other hand, the na\"{i}ve approaches perform a brute-force search over the possible refinements, where their number is exponential in the number of predicates in the query (and their domain). Thus, datasets with high cardinality in the domain of the query predicate are likely to be challenging for the na\"{i}ve solutions.  

Figure~\ref{fig:r1} presents performance for the Astronauts dataset. The optimized MILP solution outperforms the unoptimized MILP, and we observe a speedup of up to 6 times. Given that there are $114$ different values for ``Graduate Major'', the space of refinements is extremely large and both Na\"{i}ve and Na\"{i}ve+prov timed out (and thus omitted from the graph). \Cref{fig:law_method,fig:meps_method,fig:r4} show the results for the remaining datasets. In these cases, due to the data size, the unoptimized MILP was unable to terminate before the time-out. MEPS and TPC-H have a relatively small space of refinements for the posed queries, making Na\"{i}ve+prov competitive with MILP+opt. However, Law Students has a considerably larger space of refinements (although modest compared to Astronauts), making Na\"{i}ve time out and Na\"{i}ve+prov significantly slower than MILP+opt. Essentially, MILP+opt is well-posed to deal with scaling both the data size and the space of the possible refinements.
The na\"{i}ve brute-force search methods and unoptimized MILP method fail to scale, and we exclude them from the rest of the experiments.

\paragraph*{Effect of $k^*$}
We study the effect of $k^*$, the largest $k$ with a constraint in the constraint set, on the running time of our algorithm by increasing the parameter $k$ of the constraint from $10$ to $100$ in increments of $10$. The results are presented in Figure~\ref{fig:time_vs_k}. Recall that the relevancy-based optimization from \Cref{sec:optimizations} aims at reducing the program size using $k^*$. 
We expect to see its effect degrade as $k^*$ increases, as shown in \Cref{fig:r6,fig:r7}. 
The optimization is less effective for  Astronauts (Figure~\ref{fig:r5}), as the number of different lineage equivalent classes is large, and each consists of a relatively small number of tuples (fewer than $10$). 
Therefore, the expression generated for very few tuples may be removed from the program. The optimization is particularly effective for $Q_5$ of TPC-H (\Cref{fig:r8}), as the vast majority of expressions are removed as there are only $5$ lineage-equivalent classes. Moreover, we see that most of the time is spent setting up the problem rather than solving it.

\paragraph{Maximum deviation ($\varepsilon$)}
\label{sec:time_vs_eps}
While an increase in $\varepsilon$ may make finding feasible refinements easier, the solver must still find the minimal refinement, which remains a difficult task. Therefore, the value of $\varepsilon$ should not significantly affect the running time. Figure~\ref{fig:time_vs_eps} shows that the running time is fairly stable. We observed a decrease when $\varepsilon$ reaches $1.0$. This is because we use only lower-bound constraints in this experiment, where the deviation of any (refined) query is bounded by $1.0$, i.e. finding a satisfying refinement is trivial as all refinements are good enough.
In \Cref{fig:r12}, 
the solver time is negligible and depends mostly on the setup time, which is very similar across all values of $\varepsilon$ (differing by at most $1\%$).

\paragraph*{Number of constraints}
The number of generated expressions of the form (\ref{eq:in_prefix_inline}) and (\ref{eq:tuples_to_satisfy_inline}) is linear in the number of constraints. Thus, when increasing the number of constraints, the program size increases and as a result, we expect to see an increase in the running time.
We gradually added constraints to the constraint set in the order they listed in Table~\ref{tab:queriesAndConstraints}. To ensure that the set of constraints can be satisfied along with the default value of $\varepsilon$, we slightly adjust the value of the first two constraints for Astronauts, Law Students, and MEPS to have a lower-bound of $\frac{k}{3}$. As shown in Figure~\ref{fig:time_vs_num_of_constraints}, we observed a slight increase in the running time as the number of constraints increased in contrast to increasing the value of $k$. The number of expressions is linear in the number of constraints and tuples, however there are significantly fewer constraints than tuples, which is why the number of constraints does not have a pronounced effect on the runtime.
TPC-H \Cref{fig:r16} shows a negligible difference as the vast majority of the time is set up the MILP problem, as the solver has only $5$ lineage equivalence classes to explore.

\paragraph*{Effect of constraints type}
In \Cref{sec:optimizations}, we presented an optimization that is effective when tuples belong to groups with only either lower-bound or upper-bound constraints made on them. To demonstrate the effect of this optimization, we generate two sets of constraints for each dataset: $\constraints_{L}$ with lower bound constraints only, and $\constraints_{M}$ with a mixed set of upper bound and lower bound constraints. In particular, each dataset, $\constraints_{L}$ includes constraints (1) and (2) from Table~\ref{tab:queriesAndConstraints}, and $\constraints_{M}$ includes constraints (1) and (2), where constraint (2) is turned into an upper-bound constraint. 
Notice that these particular attributes are binary, and, as we assume there are at least $k^*$ tuples in the output, they are equivalent (except for TPC-H, which lacks any binary attributes). We then compared the running time when using $\constraints_{L}$ and $\constraints_{M}$. The results are presented in \Cref{fig:time_vs_constraint_type}. As expected, the running times for the case of $\constraints_{L}$ are typically better, as shown in \Cref{fig:r17,fig:r18,fig:r19}, indicating the usefulness of the optimization. We note that the experiment in \Cref{fig:r20} depicting the experiment for TPC-H shares the same performance characteristics as the previous experiments.

\begin{figure*}[ht]
    \begin{subfigure}{.23\textwidth}
      \centering
      \includegraphics[width=4cm]{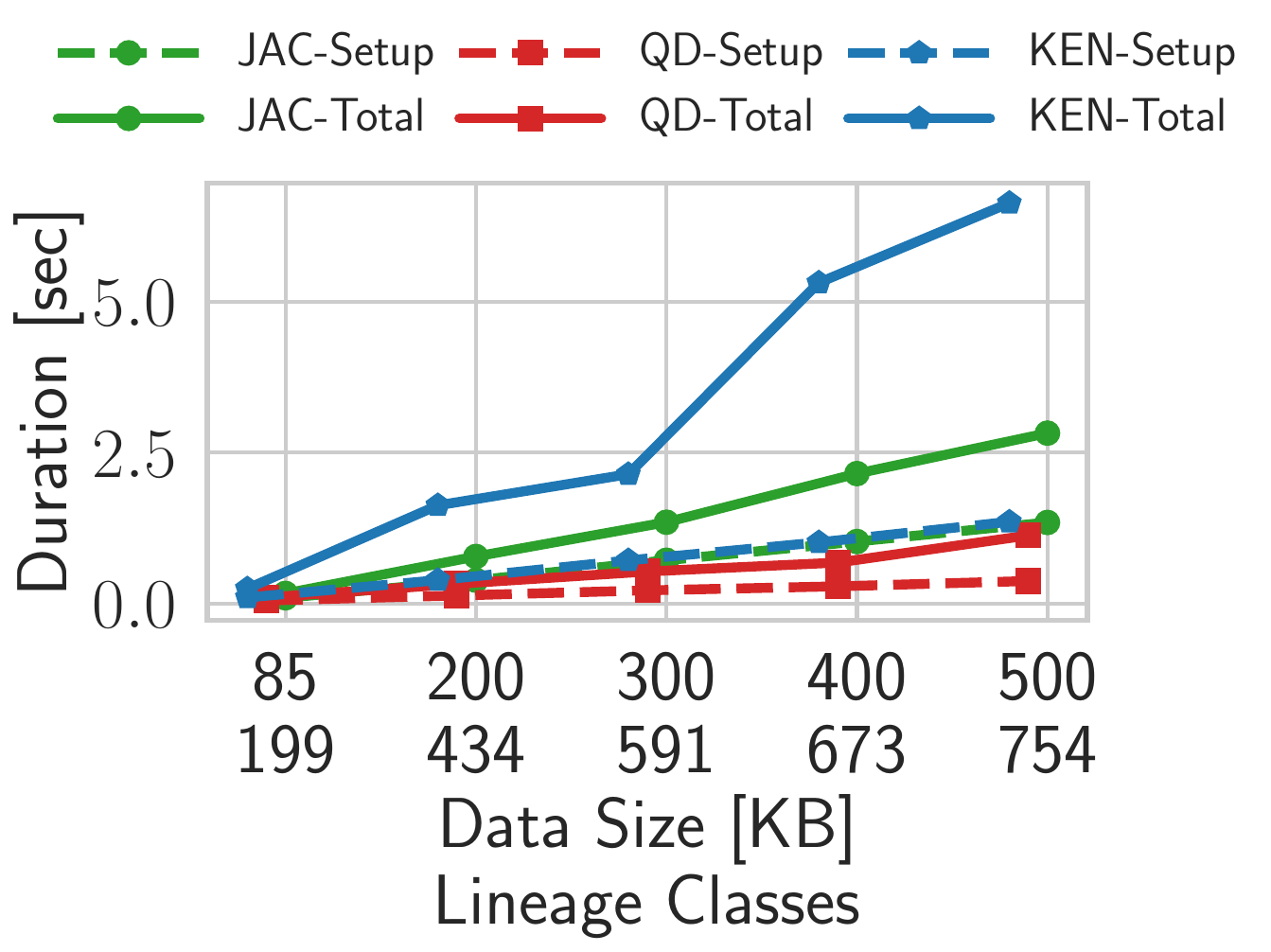}
      \caption{Astronauts}
      \label{fig:r21}
    \end{subfigure}
    \begin{subfigure}{.23\textwidth}
      \centering
      \includegraphics[width=3.6cm]{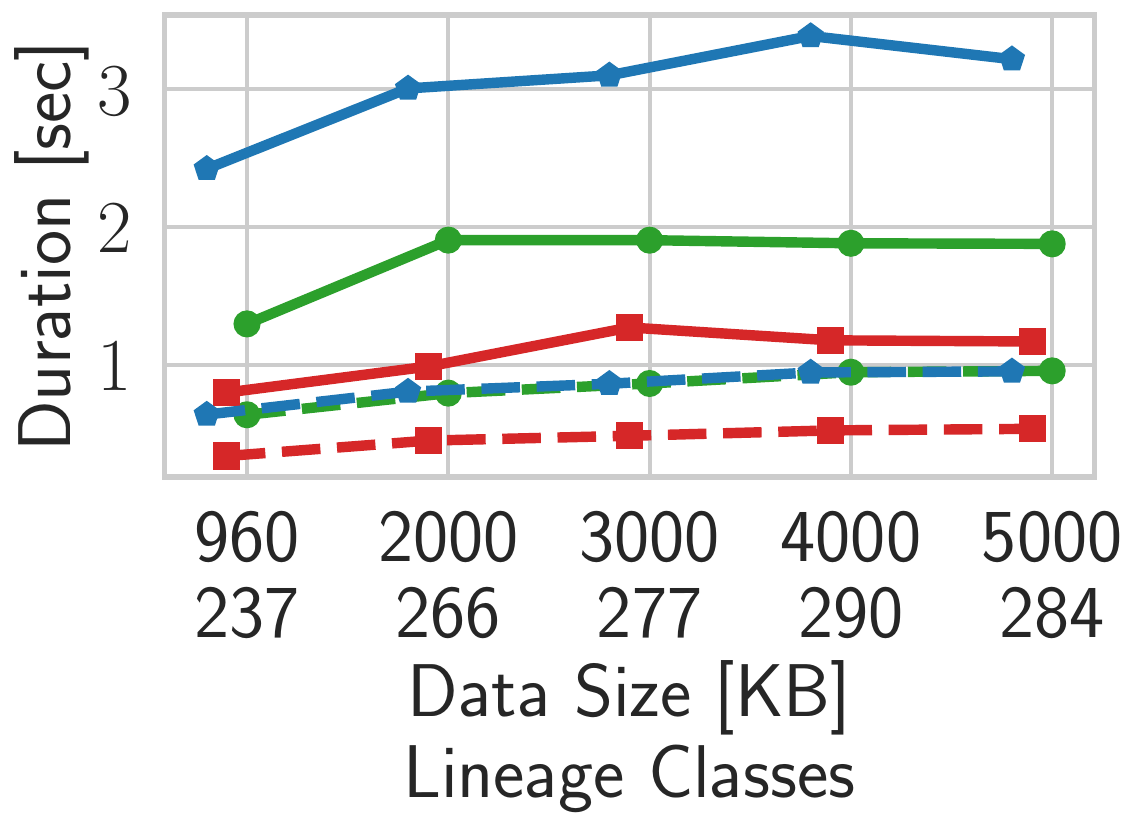}
      \caption{Law Students}
      \label{fig:r22}
    \end{subfigure}
    \begin{subfigure}{.23\textwidth}
      \centering
      \includegraphics[width=3.7cm]{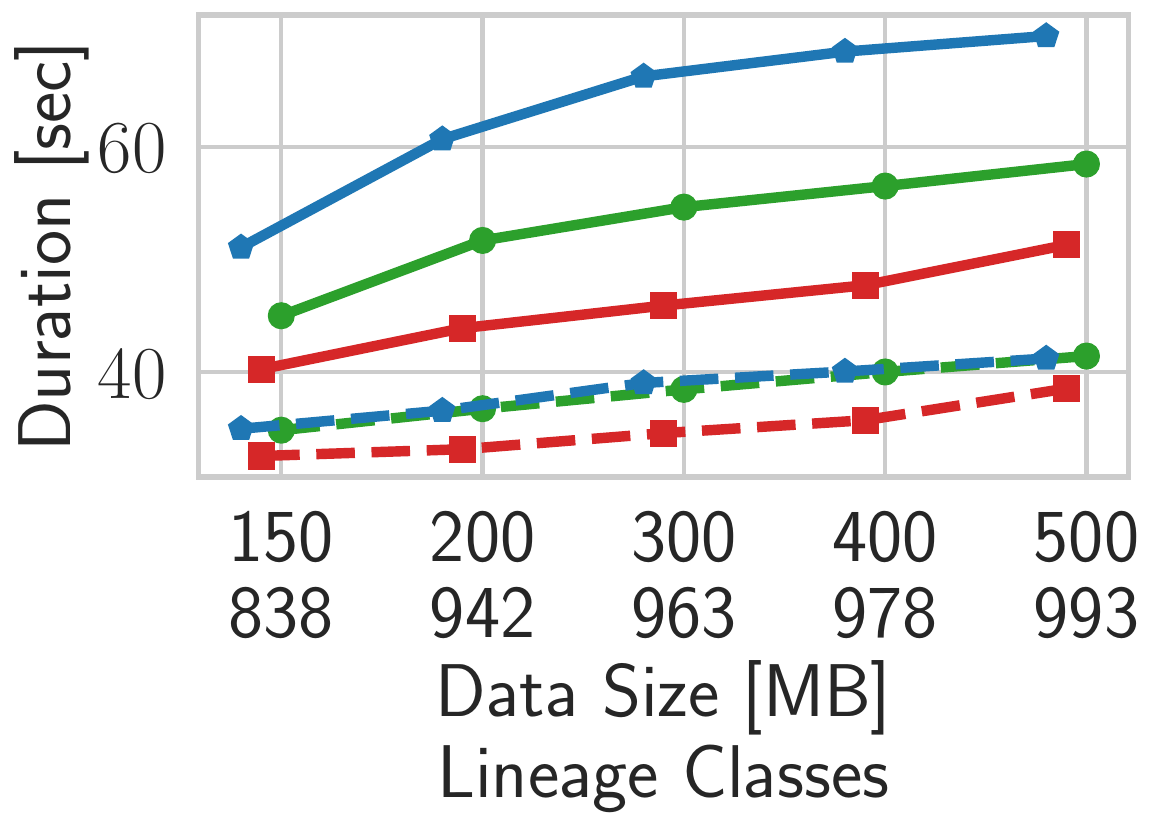}
      \caption{MEPS}
      \label{fig:r23}
    \end{subfigure}
    \begin{subfigure}{.23\textwidth}
      \centering
      \includegraphics[width=3.7cm]{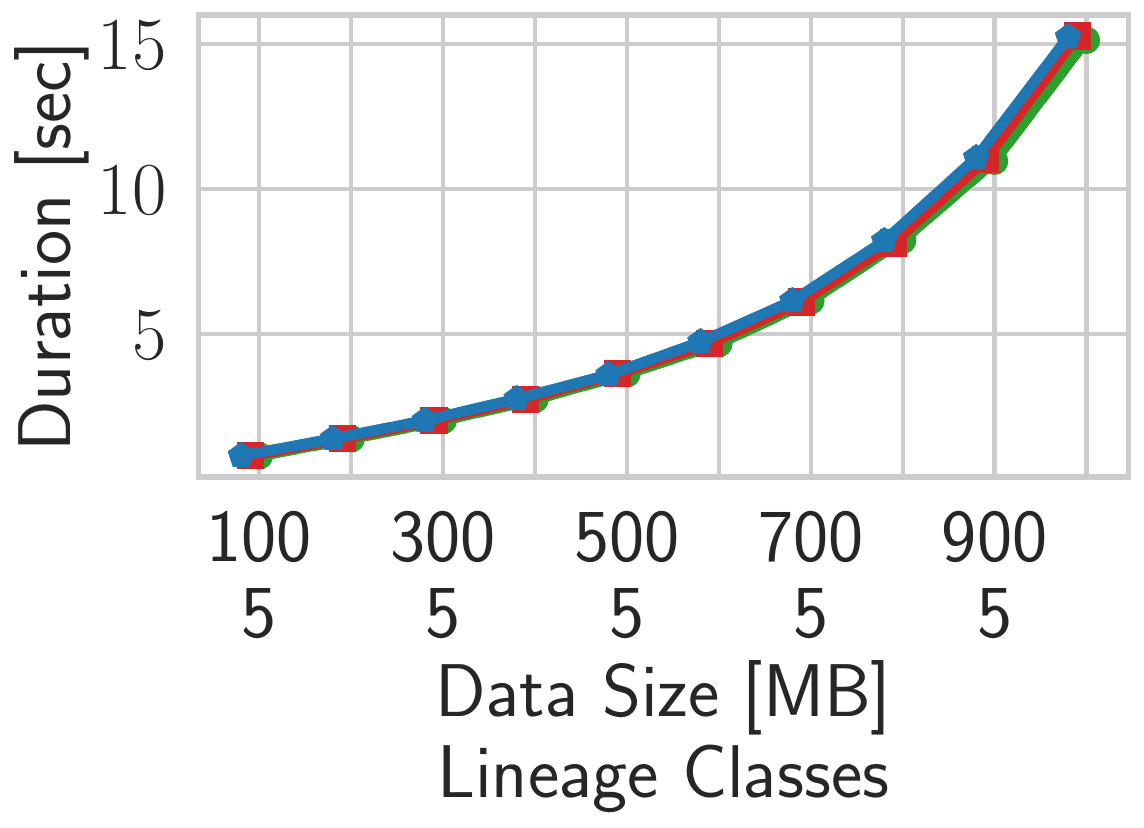}
      \caption{TPC-H}
      \label{fig:r24}
    \end{subfigure}
    
    \caption{Running time vs. data size. The setup time is mostly impacted by the cost to capture lineage from the input query, while the solving time is mostly impacted by the number of lineage classes.}
    \label{fig:time_vs_size}
    \vspace{-0.5cm}
\end{figure*}

\paragraph*{Effect of dataset size}
We use SDV~\cite{SDV} to synthesize scaled-up versions of the real datasets. Not only does this increase the data size, but new lineage classes are created according to the distribution of the dataset as well. For TPC-H, we generate different scales of the dataset according to its standard, but no new lineage classes are created. The number of variables and expressions of the generated MILP is linear in the number of tuples in the dataset. However, given that solving MILP is in ${\sf NP}$, we expect a non-polynomial increase in running time with an increase in the data size. 
The results are plotted in \Cref{fig:time_vs_size}, each plot starting from the original size of the dataset. For the Astronauts, Law students, and MEPS datasets, we observed a modest increase in the runtime as the data size grows. This could be explained by the low increase in the number of lineage classes, which impacts the efficiency of our optimization and has a greater effect on the running time.
In TPC-H (Figure~\ref{fig:r24}), the vast majority of the running time is spent building the MILP problem, and in this case, constructing the set of lineages for Q5 involves a non-trivial amount of join processing.

\paragraph*{Distance measure}
We observed that in most cases, $DIS_{Kendall}$ is the hardest to compute as it involves extra variables in order to linearize the measure. When the refinement space is extremely large, such as for Astronauts,  $DIS_{pred}$ takes longer to prove optimality (as seen in \Cref{fig:r5,fig:r13}).

\begin{figure}[t!]
    \begin{subfigure}{.23\textwidth}
      \centering
      \includegraphics[width=4.3cm]{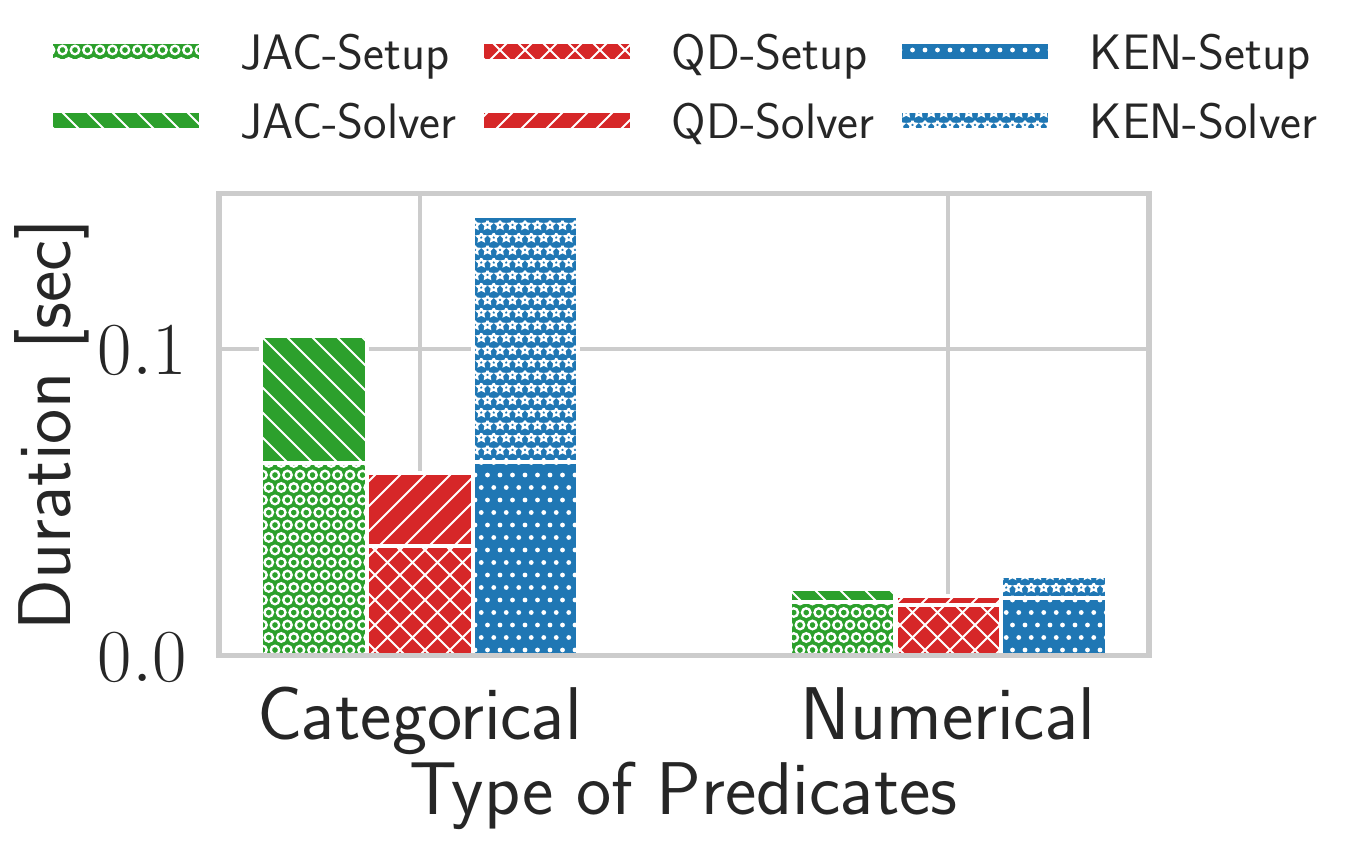}
      \caption{Astronauts}
      \label{fig:p1}
    \end{subfigure}%
    \begin{subfigure}{.23\textwidth}
      \centering
      \includegraphics[width=4cm]{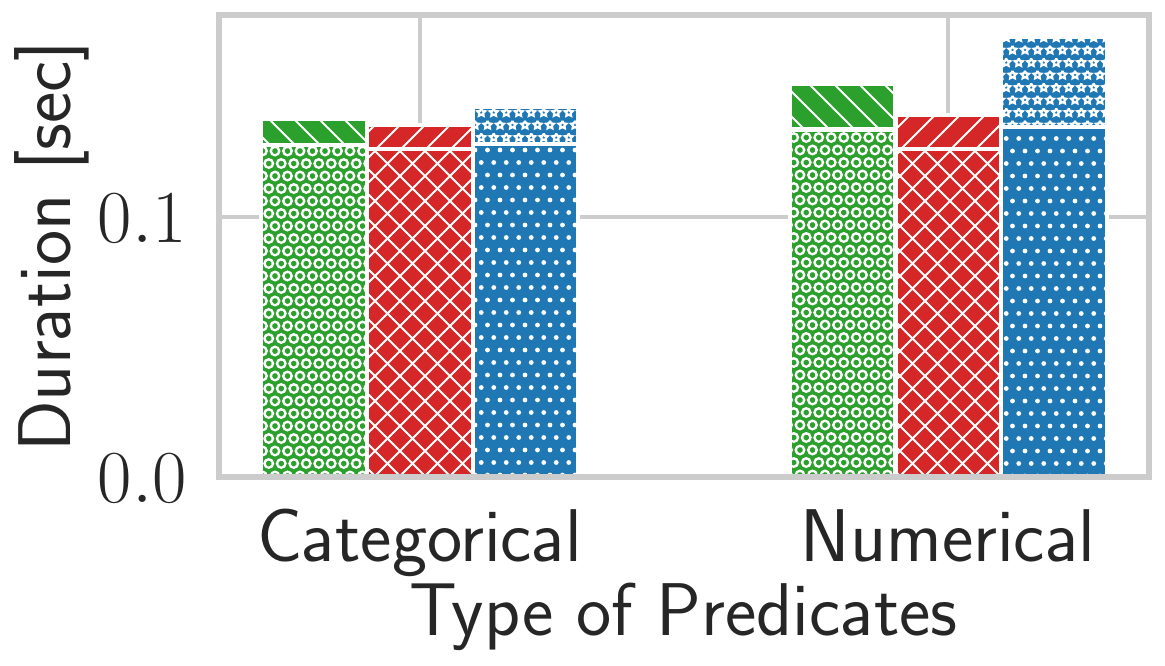}
      \caption{Law Students}
      \label{fig:p2}
    \end{subfigure}
    \vspace{-0.3cm}
    \caption{Running time as a function of the type of predicates used in the query, showing that categorical predicates may take more time to refine if not negligible.}
    \vspace{-0.3cm}
    \label{fig:time_vs_pred_type}
\end{figure}
\paragraph*{Effect of predicates type}
Recall from \Cref{sec:search} that categorical and numerical predicates are processed differently by our MILP model. While categorical values map directly to one MILP variable, a few auxiliary variables and expressions must be created in order to model the semantics of numerical predicates. Therefore, we are interested in whether or not this handling makes numerical predicates less efficient to refine than categorical. The queries defined for MEPS and TPC-H lack numerical predicates, so we use Astronauts and Law Students for this experiment. We then compare the runtime of refining $Q_A$ and $Q_L$ with either only their categorical or numerical predicates. Recall that the categorical attribute of Astronauts has a very large domain, so the number of refinements of this predicate only is extremely large. Therefore, as shown in \Cref{fig:p1}, we see that refining $Q_A$ with only its categorical predicate takes longer than $Q_A$ with only its numerical predicates. \Cref{fig:p2} shows the difference is negligible -- this follows since the space of refinements is much smaller given the smaller domain for its categorical attribute. We conclude from this experiment that in some cases, categorical predicates can take more time to refine if the size of their domain is large.

\subsection{Comparison with Erica \cite{ERICA,ERICAfull}}
\label{sec:erica_comparison}
We conclude with a comparison to 
Erica~\cite{ERICA,ERICAfull}, which presents a similar framework for query refinements to satisfy cardinality constraints over groups representation in the query's output. We note Erica focuses on cardinality constraints over the {\it entire} output, without considering the order of tuples. By restricting the overall output size to $k$, Erica may be used to refine a given query to satisfy constraints over the top-$k$ tuples. However, as we next demonstrate, this additional constraint over the output size also limits the possible refinements to those that have at most $k$ tuples. Moreover, this adjustment of Erica cannot be used to constrain over different values of $k$ simultaneously (as in our running example). Additionally, since satisfying the constraints in our setting is more challenging, we focus on finding approximate solutions that are close to satisfying the constraints, while Erica only finds solutions that satisfy the constraints exactly. Finally, our framework allows the user to define different distance measures between queries whereas Erica uses a single distance measure based on the predicate distance.
We compare the systems by refining the query $Q_L$ except with the predicates {\tt Region = `GL' AND GPA >= 3.0}.
subject to the singleton constraint set $\constraints = \{ \lb{Sex='F'}{k=100} = 50 \}$. To be consistent with~\cite{ERICA,ERICAfull}, we aim at minimizing the predicate distance (using $DIS_{pred}$ as the distance measure) and allow only results that satisfy the constraints exactly, i.e., $\varepsilon = 0$. When running Erica, we added a constraint requiring that exactly $100$ results are returned to ensure the top-$100$ tuples contains at least $50$ female candidates, and that there are enough results to satisfy the assumption in our problem definition. 
Using our optimized MILP-based approach, we were able to find a minimal refinement in $\approx 11$ seconds. The refinement selects candidates from the regions {\tt `GL'} or {\tt `SC'} with a GPA of at least $4.0$. Erica found $5$ different refinements in $\approx 53$ seconds, none of them are closer to $Q$ than the refinement found by our framework. In fact, all of the refinements require a GPA of at least $4.0$ and select {\it 3} regions. The refinement found by our system was not generated by Erica due to the additional constraint requiring the output size to be exactly $100$.

\section{Related Work}
\label{sec:related}

\emph{Query refinements.} The problem of query refinement has been addressed in previous studies such as~\cite{muslea2005online, KoudasLTV06, MK09, chu1994structured, tran2010conquer, tran2009query}. 
They focus on modifying queries to satisfy cardinality constraints, mostly emphasizing the overall output size rather than specific data groups within the output, and does not consider ranking of the results. 
For example, \cite{muslea2005online, KoudasLTV06} aim to relax queries with an empty result set to produce some answers. Other works like~\cite{MK09, chu1994structured} address the issues of too many or too few answers by refining queries to meet specific cardinality constraints on the result's size.
A recent line of work studied the use of refinement to satisfy diversity constraints~\cite{MLJ22,ERICA,SSAD22}. The work of~\cite{SSAD22} aims to refine queries to satisfy constraints on the size of specific data groups in the result, however, they consider only numerical predicates with a single binary sensitive attribute. Closer to our work,~Erica \cite{MLJ22,ERICA,ERICAfull} utilizes provenance annotations to efficiently find minimal refinements. While our proposed solution is inspired by these works, their focus is on selection queries and can not be easily extended to ranking queries. Particularly, the provenance model used in these works is insufficiently expressive to capture the semantics of ranking, motivating our need to devise a new way to annotate and use these annotations to find the best approximation refinement. We discuss and demonstrate the differences
in \Cref{sec:erica_comparison}.

\emph{Constrained query answering.} More generally, our problem answers queries that are subject to some set of constraints over the results. Systems like those proposed in~\cite{BRAM15,BAM14} allow querying groups of tuples that optimize some objective function while satisfying some constraints on the output, including cardinality constraints. However, they do not support top-$k$ queries and therefore do not extend to the ranking setting. The work in~\cite{BAM14} specifically relaxes the constraints of the problem to achieve partial satisfaction of the set of constraints, however it does so by removing constraints and not by modifying them as in our work. In \cite{Tiresias}, the authors develop a system to answer how-to queries. How-to queries answer how to modify the database in order to satisfy some constraints while optimizing for an objective. However, their system also lacks support for ranking, making it unsuitable to use for intervening on the top-$k$ for various $k$ values as in our framework.

\emph{Fairness in ranking.} The problem we consider in this paper have implications in the context of fairness. Fairness in ranking has been the subject of much recent attention~\cite{ZYS23, ZYS23b, YGS19,CSV18, AJS19, IWSR22, CMR23, YS17, KR18, CMV20}. These works can be categorized as post-processing methods (e.g.,~\cite{YS17, YGS19,CSV18}) that directly modify the output rankings, or in-processing solutions~\cite{AJS19, IWSR22, CMR23, KR18, CMV20} that adjust the ranking algorithm or modify items to produce a different score. Our solution can be considered as an in-processing method, however unlike existing solutions, we assume ranking algorithms and scores of different items are well-designed, and do not modify them.

\emph{Query result diversification.} 
Query result diversification aims to increase result diversity while maintaining relevance of results to the original query by including or excluding tuples from the set of tuples in the result of the query output.~\cite{GS09,VRB+11,DF14}. Unlike our solution, the diversification is achieved by modifying the set of the tuples directly rather than the query, and does not consider tuples absent from the original query.

\emph{MILP \& databases.} Mixed-integer linear programming has been used in data management in order to solve relevant {\sf NP-hard} optimization problems. However, as pointed out in \cite{Tiresias,QFix,BAM18,PackageQueriesScale}, scaling MILP problems to database-size problems is difficult. In order to scale, these works make several optimizations. In particular, the relevancy-based optimization we proposed resembles optimizations presented in~\cite{QFix,Tiresias}.

\section{Conclusion}
\label{sec:conclusion}

We identified a novel intervention to diversify (according to user-input constraints) the output of top-$k$ queries by refining the selection predicates of the input query. Furthermore, we recognized the importance of maintaining the user's intent as best as possible when searching for such a refinement. Towards this end, we developed a framework that can find the closest refinement for various distance measures that satisfy the user's desired constraints. 
We introduced optimizations in order to make our framework practical for datasets of real-life scale. We demonstrated this with a suite of experiments, showing our framework's scaling capability and the usefulness of our optimizations. In the future, this problem could be extended to find refinements that remain diverse even after adding new data. This way, the refinement may explain some underlying bias of the query instead of fitting to the original data. Extending our model to richer classes of queries presents further interesting directions as we discussed in \Cref{sec:search}.

\bibliographystyle{ACM-Reference-Format}

\end{document}